\documentclass[3p,final]{elsarticle}
\usepackage[utf8]{inputenc}
\usepackage{amssymb}
\usepackage{amsmath}
 \usepackage{amsthm}
\usepackage{hyperref}
\usepackage{graphicx}
\usepackage{natbib}

\journal{arXiv}

\newtheorem{observation}{Observation}
\newtheorem{lemma}{Lemma}
\newtheorem{definition}{Definition}
\newtheorem{theorem}{Theorem}
\newtheorem{remark}{Remark}

\begin{document}

\begin{frontmatter}
\title{A Linear Time Algorithm for Computing the Eternal Vertex Cover Number of Cactus Graphs}
\author{Jasine Babu}
\ead{jasine@iitpkd.ac.in}
\author{Veena Prabhakaran}
\ead{veenaprabhakaran7@gmail.com}
\author{Arko Sharma}
\ead{arkosharma4@gmail.com}
\address{Indian Institute of Technology Palakkad, Kerala, India 678557}

\begin{abstract}
The eternal vertex cover problem is a dynamic variant of the classical vertex
cover problem. It is NP-hard to compute the eternal vertex cover number of graphs and known algorithmic results for the problem are very few. 
This paper presents a linear time recursive algorithm for computing the eternal vertex cover number of cactus graphs. Unlike other graph classes for which polynomial time algorithms for eternal vertex cover number are based on efficient computability of a known lower bound directly derived from minimum vertex cover, we show that it is a certain substructure property that helps the efficient computation of eternal vertex cover number of cactus graphs. An extension of the result  to graphs in which each block is an edge, a cycle or a biconnected chordal graph is also presented.  
\end{abstract}
\begin{keyword}
Eternal vertex cover \sep Cactus graphs \sep Linear time algorithm \sep Chordal graphs.
\end{keyword}
\end{frontmatter}
\newcommand{\evc}{\operatorname{evc}}
\newcommand{\mvc}{\operatorname{mvc}}
\newcommand{\cvc}{\operatorname{cvc}}
\newcommand{\vb}{\operatorname{VB_G}}
\section{Introduction}
Eternal vertex cover problem can be described in terms of a two player game played on any given graph $G$ between an attacker who attacks the edges of $G$ and a defender who controls the placement of a fixed number of guards (say $k$) on a subset of vertices of $G$ to defend the attack \cite{Klostermeyer2009}. In the beginning of the game, the defender chooses a placement of guards on a subset of vertices of $G$, defining an \textit{initial configuration}. Subsequently, in each round of the game, the attacker attacks an edge of her choice. In response, the defender has to reconfigure the guards. The guards move in parallel. Each guard may be retained in its position or moved to a neighboring vertex, ensuring that at least one guard from an endpoint of the attacked edge moves to its other endpoint.  
If the defender is able to do this, then the attack is said to be successfully defended and the game proceeds to the next round with the attacker choosing the next edge to attack\footnote{There are two popular versions of the problem: one as described here and the second with an additional constraint that not more than one guard can be on any vertex in any configuration. The number of guards required to protect the graph may differ between these two versions. However, the results derived in this paper and those cited here hold in both the versions.}. Otherwise, the attacker wins. If the defender is able to keep defending against any infinite sequence of attacks on $G$ with $k$ guards,
then we say that there is a defense strategy on $G$ with $k$ guards. In this case, the positions of guards in any round of the game defines a \textit{configuration}, referred to as an \textit{eternal vertex cover} of $G$ of size $k$. The set of all configurations encountered in the game defines an \textit{eternal vertex cover class} of $G$. The \textit{eternal vertex cover number} of a graph $G$, denoted by $\evc(G)$ is the minimum integer $k$ such that there is a defense strategy on $G$ using $k$ guards. It is clear from the description of the game that in any configuration, at least one of the endpoints of each edge must have a guard for defending attacks successfully. Hence, it is clear that each eternal vertex cover of $G$ is a vertex cover of $G$ and $\evc(G) \ge \mvc(G)$, the vertex cover number of $G$. 

The decision version of the eternal vertex cover problem takes a graph $G$ and an integer $k$ as inputs and asks whether $G$ has an eternal vertex cover of size at most $k$. Fomin et al.~\cite{Fomin2010} showed that this problem is NP-hard and is in PSPACE. The problem was recently shown to be NP-Complete for internally triangulated planar graphs \cite{evc_lb_Arxiv}. It is not yet known whether the problem is in NP for bipartite graphs~\cite{Fomin2010}. Similarly, it is an open question whether computing the eternal vertex cover number of bounded treewidth graphs is NP-Hard or not~\cite{Fomin2010}. The problem remains open even for graphs of treewidth two. 

Dynamic variants of other classical graph parameters like dominating set~\cite{GOLDWASSER2008,KLOSTERMEYER2012,rinemberg2019,goddard2005} and independent set~\cite{Hartnell2014,Caro2016EternalIS}  are also well studied in literature.
Rinemberg et al.~\cite{rinemberg2019} recently showed that eternal domination number of interval graphs can be computed in polynomial time.
The relationship between eternal vertex cover number and other dynamic graph parameters was explored by Klostermeyer and Mynhardt~\cite{Klostermeyer2011} and Anderson et al.~\cite{Anderson2014}. It is known that $\evc(G) \le \cvc(G)+1$, where $\cvc(G)$ is the minimum cardinality of a connected vertex cover of $G$ \cite{Klostermeyer2009, Fomin2010}. Fomin et al.~\cite{Fomin2010} showed that $\evc(G)$ is at most twice the size of a maximum matching in $G$. 
It is also known that for any connected graph $G$, $\evc(G) \ge \mvc_X(G)$, where $\mvc_X(G)$ is the minimum cardinality of a vertex cover of $G$ that contains all cut vertices of $G$~\cite{evc_lb_Arxiv}. 

Precise formulae or algorithms for computing eternal vertex cover number are known for only a few classes of graphs. The initial work by Klostermeyer and Mynhardt gave explicit formulae for the eternal vertex cover number of trees, cycles and grids~\cite{Klostermeyer2009}. 
The eternal vertex cover number of a tree $T$ is known to be $|V(T)|-|L(T)|+1$, where $L(T)$ is the number of leaves of $T$. The eternal vertex cover number of a cycle on $n$ vertices is $\lceil\frac{n}{2}\rceil$, which is equal to its vertex cover number. A polynomial time algorithm for computing eternal vertex cover number of chordal graphs has been obtained recently~\cite{evc_lb_Arxiv}. Parameterized algorithms for the problem are discussed by Fomin et al.~\cite{Fomin2010}. Araki et al.~\cite{Hisashi_Gen_Trees} discusses a polynomial time algorithm to compute eternal vertex cover number of a simple generalization of trees. 

All the graph classes for which polynomial time algorithms for eternal vertex cover are known so far exhibit a common property. Recall that 
$\mvc_X(G)$, which is the minimum cardinality of a vertex cover of $G$ that contains all cut vertices of $G$ is a lower bound to $\evc(G)$.
For any graph $G$ that belong to the classes given above, this lower bound  is polynomial time computable and $\evc(G) \in \{\mvc_X(G), \mvc_X(G)+1\}$. However, this is not true for cactus graphs, as shown in Section~\ref{sec:example}. Hence, we formulate a new lower bounding technique based on the {\it substructure property} (Definition~\ref{def:substructure}) to handle cactus graphs.
 
A \textit{cactus} is a connected graph in which any two simple cycles have at most one vertex in common. A cactus is a generalization of cycles and trees in the following sense: if we contract every cycle in a cactus to a single vertex, the resultant graph is a tree. Cactus graphs have treewidth at most two. Many important NP-hard combinatorial problems are polynomial time solvable on cactus graphs~\cite{HARE1987437}. Since trees and cycles are examples of cactus graphs, we know that there exist cactus graphs $G$ for which $\mvc_X(G)$ lower bound is close to $\evc(G)$. However, there are cactus graphs $G$ for which the eternal vertex cover number is more than $1.5$ times $\mvc_X(G)$  (Section~\ref{sec:example}).  Thus, the reason for efficient computability of eternal vertex cover number for this graph class is not just related to the lower bound $\mvc_X(G)$. 

The lower bound argument for eternal vertex cover number of trees was obtained using a recursive procedure~\cite{Klostermeyer2009}. 
We show that by using a more sophisticated recursive method based on the substructure property of cactus graphs, we can device a new lower bounding strategy applicable for cactus graphs. 
From this, we derive a formula for computing the eternal vertex cover number of a cactus $G$, in terms of eternal vertex cover number and some related parameters of certain edge disjoint subgraphs of $G$. The algorithm for computing eternal vertex cover number and the other required parameters is a recursive procedure based on this formula and it runs in time linear in the size of the input graph. Finally, we present an extension of the result  to graphs in which each block is an edge, a cycle or a biconnected chordal graph. It is shown that for graphs of this class, eternal vertex cover number can be computed in quadratic time. In particular, this method implies a quadratic time algorithm for the computation of eternal vertex cover number of chordal graphs, which is different from the algorithm given in~\cite{evc_lb_Arxiv}.
\section{Some Basic Observations}\label{sec:basics}
\begin{definition}
 Let $G$ be a graph and $S \subseteq V(G)$. The minimum cardinality of a vertex cover of $G$ that contains all vertices of $S$ is denoted by $\mvc_S(G)$. 
 The minimum integer $k$ such that there is a defense strategy on $G$ using $k$ guards with all vertices of $S$ being occupied in each configuration is denoted by 
 $\evc_S(G)$. 
\end{definition}
When $S=\{v\}$, we use $\mvc_v(G)$ and $\evc_v(G)$ respectively instead of $\mvc_S(G)$ and $\evc_S(G)$. From the above definition, it is clear that $\evc(G) \le \evc_S(G)$.
\begin{definition}
 If $G$ is a graph and $x \in V(G)$, we use $G_x^+$ to denote the graph obtained by adding an additional vertex 
which is made adjacent only to $x$. 
\end{definition}
\begin{observation}\label{obs:evc_extension_ub}
$\evc(G_x^+) \le \evc(G)+1$.
\end{observation}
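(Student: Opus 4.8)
The plan is to upgrade any defense strategy on $G$ into a defense strategy on $G_x^+$ using exactly one extra guard dedicated to the new pendant edge. Write $x'$ for the added vertex, so that in $G_x^+$ the vertex $x'$ is adjacent only to $x$, and fix a defense strategy $\mathcal{S}$ on $G$ that uses $k=\evc(G)$ guards. I would play on $G_x^+$ with $k+1$ guards, maintaining the invariant that $k$ of them (the \emph{$G$-guards}) always occupy a configuration reachable under $\mathcal{S}$ on $G$, while the remaining guard (the \emph{pendant guard}) always occupies $x$ or $x'$; the strategy will additionally arrange that the pendant guard is brought back to $x$ after every attack on an edge of $G$.

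The responses are as follows. If the attacker attacks an edge $e\in E(G)$, the $G$-guards react exactly as $\mathcal{S}$ prescribes and the pendant guard moves to $x$ (travelling along $x'x$ if it was at $x'$, and otherwise staying put); the move across $e$ is realized by the $G$-guards, and the new configuration is an $\mathcal{S}$-configuration together with a guard on $x$, which covers every edge of $G$ as well as $xx'$. If the attacker attacks the edge $xx'$, the $G$-guards stay put and the pendant guard crosses $xx'$; the move across $xx'$ is realized by the pendant guard, the edge $xx'$ is covered by the endpoint now holding the pendant guard, and every edge of $G$ remains covered because the $G$-guards still form a vertex cover of $G$ --- in particular every edge of $G$ incident with $x$ is covered, since a vertex cover of $G$ must contain $x$ or all of its $G$-neighbors. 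Taking the initial configuration to be the initial configuration of $\mathcal{S}$ together with a guard on $x$, one checks that from every reachable configuration every attack can be answered within this family, so this is a valid defense strategy on $G_x^+$ with $k+1$ guards, and hence $\evc(G_x^+)\le\evc(G)+1$.

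I do not anticipate a real obstacle, as this is essentially a padding argument; the only points to verify carefully are that the ``at least one guard crosses the attacked edge'' requirement is always met --- which is why edges of $G$ are delegated to $\mathcal{S}$ and the pendant edge is delegated to the dedicated guard --- and that moving the pendant guard from $x$ onto $x'$ never leaves an edge of $G$ at $x$ exposed, which is exactly the vertex-cover property of the $\mathcal{S}$-configuration. For the variant of the problem that allows at most one guard per vertex, the same argument goes through after a minor adjustment: park the pendant guard on $x'$ precisely when the current $\mathcal{S}$-configuration already occupies $x$, and use that guards are interchangeable (so the pendant guard and a $G$-guard may swap their labels across $xx'$) to answer an attack on $xx'$ in that situation.
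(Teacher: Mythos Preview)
Your proof is correct and follows essentially the same approach as the paper: simulate an optimal defense strategy on $G$ with $k$ guards and dedicate one extra guard to handle the pendant edge. The only cosmetic difference is that the paper keeps the extra guard resting on the new vertex (moving it to $x$ only when $xv$ is attacked with $x$ unoccupied, and returning it in the very next round), whereas you keep it resting on $x$; both variants are instances of the same padding argument.
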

\begin{proof} 
  Suppose $\evc(G)=k$. Let $v$ be the vertex in $V(G_x^+) \setminus V(G)$. 
   There is a defense strategy on $G_x^+$ using $k+1$ guards, by using $k$ guards to move exactly the same way as they would have been moved for defending attacks on $G$ and 
   an additional $k+1^{th}$ guard to protect against attacks on the edge $xv$. If the edge  $xv$ is attacked when $x$ is not occupied, then
  the additional guard will move from $v$ to $x$. In the very next round, this guard moves back to $v$, irrespective of the edge attacked. 
  It is easy to check that using this strategy, any attack on $G_x^+$ can be defended.
 \end{proof}

\begin{definition}[x-components and x-extensions]\label{def:x-component}
Let $x$ be a cut vertex in a connected graph $G$ and $H$ be a component of $G \setminus x$. Let $G_0$ be the induced subgraph of $G$
on the vertex set $V (H) \cup \{x\}$. Then, $G_0$ is called an $x$-component of $G$ and $G$ is
called an $x$-extension of $G_0$.
%
%
\end{definition}

 \begin{definition}[Substructure property]\label{def:substructure}
 Let $x$ be an arbitrary non-cut vertex of a graph $G$. If the following is true for any arbitrary $x$-extension $G'$ of $G$, then $G$ satisfies substructure property:
 \begin{itemize}
  \item  if  $\evc(G_x^+) \le \evc_x(G)$, then in every eternal vertex cover $C'$ of $G'$, the number of guards on $V(G)$ is at least $\evc_x(G)-1$ and
  \item  if $\evc(G_x^+) > \evc_x(G)$, then in every eternal vertex cover $C'$ of $G'$, the number of guards on $V(G)$ is at least $\evc_x(G)$.
 \end{itemize}
\end{definition}

\begin{observation}\label{obs:evcx_lb}
 If a graph $G$ satisfies substructure property and $x$ is a non-cut vertex of $G$, then $\evc_x(G) \le \evc(G_x^+) \le \evc_x(G)+1$. 
\end{observation}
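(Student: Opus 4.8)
The plan is to prove the two inequalities of the statement separately, invoking the substructure property only for the harder (lower) bound. The upper bound $\evc(G_x^+) \le \evc_x(G)+1$ is the routine half: every defense strategy witnessing $\evc_x(G)$ is in particular a defense strategy on $G$ (just forget the extra occupancy requirement on $x$), so $\evc(G) \le \evc_x(G)$, and combining this with Observation~\ref{obs:evc_extension_ub} yields $\evc(G_x^+) \le \evc(G)+1 \le \evc_x(G)+1$. If an explicit strategy is preferred, one can instead run an optimal $x$-occupying strategy for $G$ on the copy of $G$ inside $G_x^+$, keep one spare guard parked on the new pendant vertex $v$, and whenever the pendant edge $xv$ is attacked, exchange the guard on $x$ with the spare on $v$ so that $x$ stays occupied; this uses $\evc_x(G)+1$ guards.

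For the lower bound $\evc_x(G) \le \evc(G_x^+)$, the key point is to notice that $G_x^+$ is itself an $x$-extension of $G$ in the sense of Definition~\ref{def:x-component}. Indeed, $x$ is a cut vertex of $G_x^+$; since $x$ is a non-cut vertex of $G$, the graph $G \setminus x$ is connected, so the components of $G_x^+ \setminus x$ are exactly $G \setminus x$ and the singleton $\{v\}$; and the subgraph of $G_x^+$ induced on $V(G)$ is precisely $G$. Hence $G$ is an $x$-component of $G_x^+$, and the substructure property of $G$ (Definition~\ref{def:substructure}) may be applied with $G' = G_x^+$.

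I would then argue by contradiction. Suppose $\evc(G_x^+) = k < \evc_x(G)$, so in particular $\evc(G_x^+) \le \evc_x(G)$ and we are in the first case of the substructure property: every eternal vertex cover of $G_x^+$ places at least $\evc_x(G)-1 \ge k$ guards on $V(G)$. Since there are only $k$ guards in total, this forces every eternal vertex cover of $G_x^+$ to keep all $k$ guards inside $V(G)$ and none on $v$. Now take any configuration $C_0$ from a $k$-guard defense strategy on $G_x^+$ and let the attacker attack the pendant edge $xv$; because $v$ is unoccupied in $C_0$, the only way to defend is to move the guard on $x$ to $v$, producing a configuration $C_1$ in which $v$ is occupied — yet $C_1$ is again an eternal vertex cover of $G_x^+$, a contradiction. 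Therefore $\evc(G_x^+) \ge \evc_x(G)$, and together with the upper bound this proves the observation. The only genuinely nontrivial step is this final contradiction, where the substructure property does the real work; the rest is just the bookkeeping of recognizing $G_x^+$ as an $x$-extension of $G$ and the one-line pigeonhole count, which must be stated carefully.
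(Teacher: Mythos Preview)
Your proof is correct and uses the same two ingredients as the paper's proof: the upper bound via Observation~\ref{obs:evc_extension_ub} together with $\evc(G)\le\evc_x(G)$, and the lower bound via the substructure property applied to $G' = G_x^+$ viewed as an $x$-extension of $G$. The only difference is organizational: the paper argues the lower bound directly by first observing that any eternal vertex cover class of $G_x^+$ must contain a configuration with the pendant vertex $v$ occupied, and then invoking the substructure bound of $\evc_x(G)-1$ on $V(G)$ in that configuration to get a total of at least $\evc_x(G)$ guards; you reach the same conclusion by contradiction, first applying the substructure bound to force $v$ unoccupied in every configuration and then attacking $xv$ to produce a forbidden configuration. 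Both routes are short and rely on exactly the same facts.
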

\begin{proof}
 Suppose $G$ satisfies substructure property and $x$ is a non-cut vertex of $G$. Let $v$ be the vertex in $V(G_x^+)\setminus V(G)$.  In any eternal vertex cover class of $G_x^+$, there must be a configuration in which there is a guard on $v$. 
 By substructure property, in this configuration, the number of guards in $V(G)$ must be at least $\evc_x(G)-1$.  Hence, the number of guards required on 
 $V(G_x^+)$ is at least $\evc_x(G)$.  Consequently, $\evc(G_x^+) \ge \evc_x(G)$. Further, by Observation~\ref{obs:evc_extension_ub}, it follows that $\evc(G_x^+) \le \evc_x(G)+1$. 
\end{proof}

\begin{definition}
 Let $G$ be a graph and $x$ be a vertex of $G$.  
 \begin{enumerate}
  \item $G$ is Type~1 with respect to $x$ if $\evc(G_x^+)=\evc_x(G)$ and
  \item $G$ is Type~2 with respect to $x$ if $\evc(G_x^+)>\evc_x(G)$. 
 \end{enumerate}
\end{definition}
\begin{remark}
 From the above definition, a tree is Type~1 with respect to any of its vertices.
 An even cycle is Type~1 with respect to any of its vertices. An odd cycle is Type~2 with respect to any of its vertices. A complete graph on $3$ or more vertices is Type~2 with respect to any of its vertices. 
\end{remark}
\begin{remark}\label{rmk:type_noncut}
 By Observation~\ref{obs:evcx_lb}, if $G$ satisfies substructure property and $x$ is a non-cut vertex of $G$, then $G$ is either Type~1 or Type~2 with respect to $x$. 
 \end{remark}
 \begin{definition}[\cite{evc_lb_Arxiv}]
 Let $x$ be a cut vertex of a connected graph $G$.  
 The set of $x$-components of $G$ will be denoted as $\mathcal{C}_x(G)$. For $i \in \{1, 2\}$, we define $T_i(G, x)$ to be the set of all $x$-components of $G$ that are Type~$i$ with respect to $x$.  
\end{definition}
\begin{lemma}\label{lem:cut-vertex}
 Let $G$ be a connected graph and $x$ be a cut-vertex of $G$ such that each $x$-component of $G$ satisfies the substructure property.
 \begin{enumerate}
  \item If all $x$-components of $G$ are Type~2, then $\evc(G)=\evc_x(G)=1+\sum_{G_i \in \mathcal{C}_x(G)}{(\evc_x(G_i)-1)}$.
  \item Otherwise, $\evc(G)=\evc_x(G)=2+\sum_{G_i \in T_1(G, x)}{(\evc_x(G_i)-2)}+
  \sum_{G_i \in T_2(G, x)}{(\evc_x(G_i)-1)}$.
 \end{enumerate}
\end{lemma}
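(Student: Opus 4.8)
The plan is, for each of the two stated cases, to prove the upper bound $\evc_x(G)\le\Phi$ and the lower bound $\evc(G)\ge\Phi$, where $\Phi$ denotes the right-hand side appearing in that case; since $\evc(G)\le\evc_x(G)$ always holds, the chain $\Phi\le\evc(G)\le\evc_x(G)\le\Phi$ then yields $\evc(G)=\evc_x(G)=\Phi$. Write $\mathcal C_x(G)=\{G_1,\dots,G_m\}$. Since $x$ is a cut vertex of $G$ we have $m\ge 2$, and since $x$ is a non-cut vertex of each $G_i$ and each $G_i$ has the substructure property, Remark~\ref{rmk:type_noncut} shows every $G_i$ is Type~1 or Type~2 with respect to $x$. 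Set $\ell_i=\evc_x(G_i)-1$ if $G_i$ is Type~1 and $\ell_i=\evc_x(G_i)$ if $G_i$ is Type~2. The single structural input is that $G$ is an $x$-extension of each $G_i$, so the substructure property of $G_i$ guarantees that in every eternal vertex cover of $G$ at least $\ell_i$ guards lie in $V(G_i)$.

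For the lower bound, sum these $m$ inequalities and use $V(G_i)\cap V(G_j)=\{x\}$: any configuration of any eternal vertex cover class of $G$ uses at least $\sum_i\ell_i-(m-1)$ guards if $x$ is occupied and at least $\sum_i\ell_i$ guards otherwise. In case~(1), $\sum_i\ell_i-(m-1)=\Phi$ and $\sum_i\ell_i\ge\Phi$, so in either subcase the count is at least $\Phi$ and we are done. In case~(2) the same computation gives only $\Phi-1$ in the occupied subcase, so I would argue by contradiction. Assume $\evc(G)\le\Phi-1$ and fix an optimal eternal vertex cover class $\mathcal C$. Then no configuration of $\mathcal C$ can have $x$ unoccupied (that would require $\ge\sum_i\ell_i\ge\Phi$ guards), and in the occupied subcase all the inequalities are forced to be equalities, so in \emph{every} configuration of $\mathcal C$ there is a guard on $x$ and exactly $\ell_i$ guards in each $V(G_i)$. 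Relabel so that $G_1$ is Type~1 (one exists in case~(2)). Since the number of guards in $V(G_1)$ is frozen at $\evc_x(G_1)-1$ while $x$ stays occupied, any response of $\mathcal C$ to an attack on an edge of $G_1$ has zero net flow of guards across $x$; restricting such responses to $V(G_1)$---keeping the $x$-guard on $x$ in the one situation where it would otherwise leave $G_1$---gives a legal defense of $G_1$ with only $\evc_x(G_1)-1$ guards in which $x$ is always occupied, contradicting the minimality in the definition of $\evc_x(G_1)$. Hence $\evc(G)\ge\Phi$ in case~(2) as well.

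For the upper bound I would exhibit one strategy on $G$ that keeps $x$ permanently occupied, which certifies $\evc_x(G)\le\Phi$. Keep one guard on $x$; in each Type~2 component $G_j$ keep $\evc_x(G_j)-1$ further guards forming, with the $x$-guard, a state of an optimal $x$-fixed defense $\tau_j$ of $G_j$; and in case~(2) also keep $\evc_x(G_i)-2$ guards in each Type~1 component $G_i$ plus one extra ``floating'' guard held near $x$. An attack inside a Type~2 component is answered by the relevant $\tau_j$, which moves only guards of $V(G_j)$ and leaves the rest of $G$ untouched. An attack inside a Type~1 component $G_i$ is answered using the strategy $\sigma_i$ on $(G_i)_x^+$ that exists because $G_i$ is Type~1, i.e.\ $\evc((G_i)_x^+)=\evc_x(G_i)$; the device is to simulate the pendant vertex of $(G_i)_x^+$ by the floating guard ``currently residing outside $G_i$'', so that $\sigma_i$ prescribes exactly when the floating guard crosses $x$ into or out of $G_i$, the substructure property of $G_i$ ensuring that whenever it is out, the $\evc_x(G_i)-1$ guards left in $V(G_i)$ indeed match a state of $\sigma_i$ with the pendant occupied. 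One then checks that a suitable invariant---$x$ occupied, each $\tau_j$ and each $\sigma_i$ in a valid state, the floating guard present and unique---is restored after each defended attack, and a head count of a resting configuration gives exactly $\Phi$. Case~(1) is the special case with no Type~1 components and no floating guard and is comparatively routine.

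The step I expect to be the main obstacle is the case~(2) upper bound. The floating guard can be displaced only in reaction to an attack, so if the attacker alternates between two distinct Type~1 components one must be able to ferry this guard back and forth using only the attacked edges; merging the several single-component strategies $\sigma_i$ into one global invariant that never loses the floating guard and never vacates $x$ is the delicate point, and it is precisely here that the hypothesis ``every $x$-component satisfies the substructure property'' does real work. A smaller technical hurdle is the case~(2) lower bound, namely justifying that the restriction-to-$G_1$ construction above really is a legal $x$-fixed defense of $G_1$; this relies on the forced tightness of all the counting inequalities established just before it.
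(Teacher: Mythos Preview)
Your proposal is correct and follows essentially the same route as the paper: the lower bound uses the substructure property componentwise together with a contradiction argument showing that a Type~1 component cannot be defended forever with only $\evc_x(G_i)-1$ guards while keeping $x$ occupied, and the upper bound maintains one guard on $x$, runs an $x$-fixed strategy on each Type~2 component, and uses the $(G_i)_x^+$ strategies on Type~1 components with a single extra guard that migrates between them. The paper phrases the last invariant as ``exactly one Type~1 component carries $\evc_x(G_i)$ guards'' rather than speaking of a floating guard, and it hand-waves the transfer step (``it is not difficult to rearrange'') just as you flag it as the delicate point.
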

\begin{proof}
 The first part is easy to see from the definition of the substructure property. 
 
 To prove the second part, suppose $H$ is an $x$-component of $G$ which is Type~1 with respect to $x$. Let $k=1+\sum_{G_i \in T_1(G, x)}{(\evc_x(G_i)-2)}+
  \sum_{G_i \in T_2(G, x)}{(\evc_x(G_i)-1)}$. By a direct application of the substructure property,
 it follows that $\evc(G)$ cannot be less than $k$. 
  Now, for contradiction, suppose $\evc(G)=k$. Let $\mathcal{C}$ be a minimum eternal vertex cover class $G$. 
  Since there are only $k$ guards on $V(G)$ and each $x$-component of $G$ satisfies the substructure property, 
  it can only be the case that in every configuration of $\mathcal{C}$ the number of guards on an $x$-component $G_i$ is exactly 
  $\evc_x(G_i)-1$ if $G_i$ is Type~1 with respect to $x$ and it is exactly $\evc_x(G_i)$ if $G_i$ is Type~2 with respect to $x$.
  Further, $x$ must also be occupied in every configuration. 
  However, to defend repeated attacks on edges of $H$ which is a Type~1 $x$-component with respect to $x$, 
  $\evc_x(H)$ guards have to be on $V(H)$ in some configuration, if $x$ is to be always occupied. 
  Then, the total number of guards on $V(G)$ must be more than $k$, a contradiction. Hence, $\evc(G) \ge k+1$.  
  
  It is not difficult to show that $\evc_x(G) \le k+1$. 
  In every configuration, we will maintain the following invariants. 
  \begin{itemize}
   \item  A guard will be kept on $x$.
   \item  There will be exactly $\evc_x(G_i)$ guards on each $x$-component of $G$ that is Type~2 with respect to $x$ such that 
  the guards on $V(G_i)$ form an eternal vertex cover of $G_i$.
  \item  In one of the Type~1 $x$-components $G_i$ of $G$ (if exists), exactly $\evc_x(G_i)$ guards will be kept
  and in every other Type~1 component $G_j$ of $G$, $\evc_x(G_j)-1$ guards will be kept. This will be done in such a way that the  
  induced configuration on $V(G_i)$, for each $x$-component $G_i$ forms an induced configuration of an eternal vertex cover of $(G_i)_x^+$. 
  \end{itemize}
   To defend an attack on an edge in a Type~2 component $G_i$ maintaining the invariants, it will be enough to move guards only on $V(G_i)$.
  To defend an attack on an edge in a Type~1 component, it will be enough to move guards in at most two $x$-components. If the edge attacked is in a Type~1 $x$-component $G_i$ 
  with $\evc_x(G_i)-1$ guards on $V(G_i)$, then there is another Type~1 $x$-component $G_j$ with $\evc_x(G_j)$ guards on $G_j$. Since $G_i$ and $G_j$ satisfy substructure property,
  it is not difficult to rearrange guards in $V(G_i)$ and $V(G_j)$ such that the number of guards on $V(G_i)$ becomes $\evc_x(G_i)$ and that on $V(G_j)$ becomes $\evc_x(G_j)-1$ and the invariants are
  maintained.
\end{proof}
\section{Computation of the Type of vertices}\label{sec:type-computation}
Lemma~\ref{lem:cut-vertex} indicates the possibility of a recursive method to compute eternal vertex cover number of graphs whose $x$-components satisfy the substructure
property.  However, this makes it necessary to also compute $\evc_x(G)$ and the type of each $x$-component of $G$, with respect to $x$. Since 
a cut vertex of $G$ is not a cut-vertex in its $x$-components, a general method to find the type of a graph with respect to any arbitrary vertex of the graph (including non-cut vertices) is necessary.  
This section addresses this issue systematically.


\begin{observation}[Type with respect to a cut vertex]\label{obs:type_cut_vertex}
  Let $G$ be a connected graph and $x$ be a cut-vertex of $G$ such that each $x$-component of $G$ satisfies the substructure property. 
  Then, $G$ is Type~1 with respect to $x$ if at least one of the $x$-components of $G$ is Type~1 with respect to $x$. Otherwise, $G$ is Type~2 with respect to $x$.
\end{observation}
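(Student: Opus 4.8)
The plan is to treat $G_x^+$ itself as a graph with cut vertex $x$, compute $\evc(G_x^+)$ by applying Lemma~\ref{lem:cut-vertex} to it, and compare the result with $\evc_x(G)$ obtained from the same lemma applied to $G$. First I would observe that since $x$ is a cut vertex of $G$ it is also a cut vertex of $G_x^+$, and that $\mathcal{C}_x(G_x^+) = \mathcal{C}_x(G) \cup \{e\}$, where $e$ is the single new edge $xv$ with $v$ the unique vertex of $V(G_x^+)\setminus V(G)$. A short direct verification then shows two facts about this pendant edge $e$, a copy of $K_2$: it satisfies the substructure property with respect to $x$ (any eternal vertex cover of any $x$-extension of $e$ must cover the edge $xv$, hence keeps at least $\evc_x(e)-1=1$ guard on $V(e)$), and it is Type~1 with respect to $x$ (since $(e)_x^+$ is the path on three vertices, whose eternal vertex cover number equals $2=\evc_x(e)$). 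Consequently every $x$-component of $G_x^+$ satisfies the substructure property, $T_2(G_x^+,x)=T_2(G,x)$, and $T_1(G_x^+,x)=T_1(G,x)\cup\{e\}$, which is always nonempty.

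Because $G_x^+$ therefore has at least one Type~1 $x$-component, the second part of Lemma~\ref{lem:cut-vertex} applies to $G_x^+$ and, using $\evc_x(e)-2=0$ to discard the term contributed by $e$, gives
\[
\evc(G_x^+) \;=\; 2 \;+\; \sum_{G_i \in T_1(G,x)}\bigl(\evc_x(G_i)-2\bigr) \;+\; \sum_{G_i \in T_2(G,x)}\bigl(\evc_x(G_i)-1\bigr).
\]
I would then split on the hypothesis. If at least one $x$-component of $G$ is Type~1, then the second part of Lemma~\ref{lem:cut-vertex} applied to $G$ itself yields exactly the same right-hand side for $\evc_x(G)$, so $\evc(G_x^+)=\evc_x(G)$ and $G$ is Type~1 with respect to $x$. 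If no $x$-component of $G$ is Type~1, i.e.\ all are Type~2, then the first part of Lemma~\ref{lem:cut-vertex} gives $\evc_x(G)=1+\sum_{G_i\in T_2(G,x)}(\evc_x(G_i)-1)$, and comparing with the displayed formula (now with $T_1(G,x)=\emptyset$) gives $\evc(G_x^+)=\evc_x(G)+1>\evc_x(G)$, so $G$ is Type~2 with respect to $x$.

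The routine part is the bookkeeping with the two formulas of Lemma~\ref{lem:cut-vertex}; the only place that needs genuine care is establishing the two auxiliary facts about the pendant edge $e$ — that it satisfies the substructure property and is Type~1 with respect to $x$ — since these are precisely what make Lemma~\ref{lem:cut-vertex} legitimately applicable to $G_x^+$. I also expect to note at the outset that $x$ being a cut vertex of $G$ already forces $|V(G)|\ge 3$ and $|\mathcal{C}_x(G)|\ge 2$, so no degenerate configurations arise.
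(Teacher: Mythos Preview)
Your proposal is correct and follows essentially the same route as the paper: the paper's proof simply notes that the pendant edge is a Type~1 $x$-component with $\evc_x$ equal to $2$, and then plugs these facts into the formulas of Lemma~\ref{lem:cut-vertex}. You supply somewhat more detail---in particular you explicitly verify that the pendant edge satisfies the substructure property, a hypothesis the paper tacitly assumes when invoking Lemma~\ref{lem:cut-vertex} on $G_x^+$---but the underlying argument is the same.
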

\begin{proof}
 Note that when a pendent edge $xv$ is added to $x$, that edge is a Type~1 component with respect to $x$. Further, $\evc_x$ of this $x$-component is $2$.
 Using these facts in the expressions given in Lemma~\ref{lem:cut-vertex} immediately yields the observation. 
 \end{proof}
 \begin{observation}
  Let $G$ be a graph that satisfies substructure property and $x$ be any vertex of $G$. If $\evc(G) < \evc_x(G)$, then $G$ is Type~1 with respect to $x$. 
 \end{observation}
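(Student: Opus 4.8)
Since $G$ being Type~1 with respect to $x$ means exactly $\evc(G_x^+)=\evc_x(G)$, the plan is to trap $\evc(G_x^+)$ between $\evc_x(G)$ from above and from below. The upper bound is immediate: requiring $x$ to be occupied in every configuration is an extra constraint, so $\evc(G)\le\evc_x(G)$ always, and the hypothesis $\evc(G)<\evc_x(G)$ gives $\evc(G)\le\evc_x(G)-1$; feeding this into Observation~\ref{obs:evc_extension_ub} yields $\evc(G_x^+)\le\evc(G)+1\le\evc_x(G)$. It then remains to prove the matching lower bound $\evc(G_x^+)\ge\evc_x(G)$, after which $G$ is Type~1 with respect to $x$ by definition.

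For the lower bound I would distinguish two cases. If $x$ is a non-cut vertex of $G$, then $\evc(G_x^+)\ge\evc_x(G)$ is precisely Observation~\ref{obs:evcx_lb}, which applies since $G$ satisfies the substructure property, and we are done. If $x$ is a cut vertex, I would instead argue that the hypothesis cannot occur at all: writing $G$ through its $x$-components and applying Lemma~\ref{lem:cut-vertex} (valid when those components satisfy the substructure property, as they do for the graph classes under consideration) gives $\evc(G)=\evc_x(G)$ in both cases of that lemma, contradicting $\evc(G)<\evc_x(G)$, so the claim holds vacuously. A uniform alternative that avoids the case split is to prove $\evc_x(G)\le\evc(G_x^+)$ directly by a projection argument: in any defense strategy on $G_x^+$ with $k$ guards, reinterpreting a guard sitting on the pendant vertex $v$ adjacent to $x$ as a guard sitting on $x$ produces a defense strategy on $G$ with $k$ guards in which $x$ is always occupied (the edge $xv$ forces $x$ or $v$ to be occupied in every configuration, every edge of $G$ stays covered, and every response to an attack on an edge of $G$ projects to a valid response).

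The only step I expect to require care is this cut-vertex case of the lower bound, since Observation~\ref{obs:evcx_lb} is stated only for non-cut vertices; the vacuity argument via Lemma~\ref{lem:cut-vertex} (or, failing that, the direct projection argument) closes the gap, and everything else is just the two-line sandwich $\evc_x(G)\le\evc(G_x^+)\le\evc_x(G)$.
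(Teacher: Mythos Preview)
Your proposal is correct and follows the same sandwich argument as the paper: the upper bound via Observation~\ref{obs:evc_extension_ub} is identical, and for the lower bound $\evc(G_x^+)\ge\evc_x(G)$ the paper simply invokes Remark~\ref{rmk:type_noncut} (your non-cut case) together with Observation~\ref{obs:type_cut_vertex} (your cut-vertex case, since that observation is itself derived from Lemma~\ref{lem:cut-vertex}), carrying the same implicit assumption on the $x$-components that you flag. Your alternative projection argument is a cleaner route the paper does not take, and it would in fact yield the lower bound without appealing to the substructure property at all.
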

\begin{proof}
If $\evc(G) < \evc_x(G)$, then by Observation~\ref{obs:evc_extension_ub}, we have $\evc(G_x^+) \le \evc(G)+1 \le \evc_x(G)$. By Remark~\ref{rmk:type_noncut} and Observation~\ref{obs:type_cut_vertex}, $\evc(G_x^+) \ge \evc_x(G)$. 
Therefore, $\evc_x(G)=\evc(G_x^+)$. From this, the observation follows. 
\end{proof}
The following observation is useful for deciding the type of a graph with respect to a pendent vertex. 
\begin{lemma}[Type with respect to a pendent vertex]\label{lem:pendent_vertex}
 Let $G$ be a graph and $x\in V(G)$.  Let $H=G_x^+$ with $v$ being the vertex in $V(H) \setminus V(G)$. Suppose each $x$-component of $H$ satisfies
 substructure property. Then, $H$ is Type~1 with respect to $v$ if and only if $G$ is Type~1 with respect to $x$.  
\end{lemma}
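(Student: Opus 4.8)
The plan is to reduce the statement to two numerical identities and then compare. Recall $H=G_x^+$. If $x$ is a non-cut vertex of $G$ then $G$ is itself an $x$-component of $H$; if $x$ is a cut vertex of $G$ then the $x$-components of $G$ are exactly the $x$-components of $H$ other than the edge $xv$. In either case the hypothesis guarantees that $G$ satisfies the substructure property when $x$ is a non-cut vertex of $G$, and that every $x$-component of $G$ satisfies it otherwise; in particular $G$ is either Type~1 or Type~2 with respect to $x$ (by Remark~\ref{rmk:type_noncut} in the first case, by Observation~\ref{obs:type_cut_vertex} in the second). The case $|V(G)|=1$ is immediate, so assume $|V(G)|\ge 2$. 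I would prove: (i)~$\evc_v(H)=\evc_x(G)+1$; and (ii)~$\evc(H_v^+)$ equals $\evc_x(G)+1$ if $G$ is Type~1 with respect to $x$ and $\evc_x(G)+2$ if $G$ is Type~2 with respect to $x$. Granting these, the lemma follows at once: $H$ is Type~1 with respect to $v$ means $\evc(H_v^+)=\evc_v(H)$, which by (i) and (ii) holds exactly when $\evc(H_v^+)=\evc_x(G)+1$, i.e.\ exactly when $G$ is Type~1 with respect to $x$.

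For (ii), observe that $H_v^+=(G_x^+)_v^+$ is obtained from $G$ by attaching a pendant path $P$ on the vertices $x,v,v'$ (edges $xv$ and $vv'$) at $x$, so $x$ is a cut vertex of $H_v^+$ whose $x$-components are $P$ together with the $x$-components of $G$ (or $G$ itself, when $x$ is a non-cut vertex of $G$). The path $P$ is a tree, hence it is Type~1 with respect to $x$, it satisfies the substructure property, and a direct check gives $\evc_x(P)=3$ (with $x$ an end vertex of $P$); the remaining $x$-components satisfy the substructure property by hypothesis. Since $P$ is a Type~1 $x$-component, part~2 of Lemma~\ref{lem:cut-vertex} applies to $H_v^+$ and $x$; the contribution of $P$ is $\evc_x(P)-2=1$, and the remaining part of the sum is simplified using the formula of Lemma~\ref{lem:cut-vertex} (for $H_v^+$, and for $G$ itself when $x$ is a cut vertex of $G$) together with Observation~\ref{obs:type_cut_vertex}; it collapses to $\evc_x(G)-2$ when $G$ is Type~1 with respect to $x$ and to $\evc_x(G)-1$ when $G$ is Type~2 with respect to $x$. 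This proves (ii).

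For (i), the inequality $\evc_v(H)\le\evc_x(G)+1$ is witnessed by a direct strategy: run an $\evc_x(G)$-guard strategy on $G$ in which $x$ is always occupied, keep one additional guard permanently at $v$, answer an attack on an edge of $G$ by the $G$-strategy, and answer an attack on $xv$ by simultaneously moving the additional guard from $v$ to $x$ and one $G$-guard from $x$ to $v$ (then relabelling which guard plays which role), which restores every invariant. For the reverse inequality, let $\tau$ be any $k$-guard strategy on $H$ keeping $v$ always occupied. Restricting every configuration of $\tau$ to $V(G)$ gives a valid eternal vertex cover strategy of $G$, since an attack on an edge of $G$ must be answered within $V(G)$. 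Moreover $x$ is occupied in every configuration: otherwise, attacking $xv$ in a configuration with $x$ empty would force the (only) guard at $v$ to move to $x$, leaving $v$ unoccupied. Since a guard is always at $v$, the restricted strategy uses at most $k-1$ guards and keeps $x$ occupied, so $\evc_x(G)\le k-1$; hence $\evc_v(H)\ge\evc_x(G)+1$.

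The step I expect to be the real obstacle is the last one: to make ``$x$ empty forces $v$ empty'' rigorous one must first argue that in $\tau$ at most one guard is ever placed at the leaf $v$. This is intuitively obvious, because a second guard at $v$ can only shuttle to $x$ and back and never helps cover any edge other than $xv$; but turning it into a proof requires rehousing would-be second guards at $x$ instead while preserving validity and the occupancy of $v$, and one also has to check that the restriction to $V(G)$ can be made $x$-forced without increasing the guard count. Everything else --- (ii), and the upper bound in (i) --- is routine given Lemma~\ref{lem:cut-vertex}.
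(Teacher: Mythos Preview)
Your proposal is correct and follows essentially the same approach as the paper. Both proofs first establish $\evc_v(H)=\evc_x(G)+1$ by the same two-sided argument (and the paper handles your ``only one guard at $v$'' worry exactly as you suggest, via a footnote observing that any surplus guards at $v$ may be relocated to $x$ without loss), and then compare with $\evc(H_v^+)$. The only cosmetic difference is in the forward direction: you compute $\evc(H_v^+)$ uniformly via Lemma~\ref{lem:cut-vertex} applied at the cut vertex $x$ of $H_v^+$, whereas the paper does this explicitly only in the Type~2 case and, in the Type~1 case, instead uses $\evc(H)=\evc_x(G)$ together with Observation~\ref{obs:evc_extension_ub} to get $\evc(H_v^+)\le\evc(H)+1=\evc_v(H)$. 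Your uniform treatment is arguably cleaner.
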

\begin{proof}
In any eternal vertex cover class of $H$ with $\evc_v(H)$ guards in which $v$ is occupied in every configuration\footnote{In the version of the problem where more than one guard is allowed on a vertex, we still can assume without loss of generality that $v$ has only one guard in any configuration. This is because instead of placing more than one guard on $v$, it is possible to place all but one of those guards on $x$.}, $x$ must also be occupied in every configuration (otherwise, an attack on the edge $vx$ cannot be defended maintaining a guard on $v$).   Hence, the induced configurations on $G$ define an
eternal vertex cover class of $G$ in which $x$ is occupied in every configuration. It follows that $\evc_v(H)\ge \evc_x(G)+1$.  
Moreover, from an eternal vertex cover class of $G$ with $x$ always occupied,
we can get an eternal vertex cover class of $H$ with $v$ always occupied by
placing an additional guard at $v$.  Hence, $\evc_v(H)\leq \evc_x(G)+1$.
Thus, $\evc_v(H)=\evc_x(G)+1$.
Note that, by Observation~\ref{obs:evc_extension_ub}, $\evc(H_v^+) \le \evc(H)+1$.

First, suppose $G$ is Type~1 with respect to $x$. Then, $\evc(H)=\evc_x(G)$ and we get $\evc_v(H)=\evc_x(G) + 1 = \evc(H)+1=\evc(H_v^+)$, which means that $H$ is Type~1 with respect to $v$.

Now, suppose $G$ is Type~2 with respect to $x$. Then, $\evc(H)=\evc_x(G)+1$. 
Further, by Observation~\ref{obs:type_cut_vertex}, if $x$ is a cut vertex in $G$, then all the $x$-components of $G$ are Type~2 with respect to $x$.
Irrespective of whether $x$ is a cut-vertex of $G$ or not, by Lemma~\ref{lem:cut-vertex}, $\evc(H_v^+)=2+\evc_x(G)$. Hence, $\evc(H_v^+)=\evc_v(H)+1$.  
Thus, if $G$ is Type~2 with respect to $x$, then $H$ is Type~2 with respect to $v$.
\end{proof}
Now we give some observations that are useful for deciding the type of a graph with respect to a degree-$2$ vertex.
\begin{lemma}\label{lem:ContractVertex}
 Let $G$ be any graph and suppose $v$ is a degree-$2$ vertex in $G$ such that its neighbors $v_1$, $v_2$ are not adjacent. Let $G'$ be the graph obtained by deleting $v$ and adding an edge between $v_1$ and $v_2$. 
 Then, $\evc_v(G)=\evc(G')+1$.
\end{lemma}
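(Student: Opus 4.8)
The plan is to establish the two inequalities $\evc_v(G)\le\evc(G')+1$ and $\evc_v(G)\ge\evc(G')+1$ separately, each by converting an optimal defense strategy for one of the two graphs into a defense strategy for the other. Throughout, I identify a configuration with the (multi)set of vertices occupied by guards and work primarily with the version of the game that forbids more than one guard per vertex; the extension to the other version is routine, relocating a surplus guard to a neighbour as is done elsewhere in the paper.

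For the upper bound, fix an optimal defense strategy on $G'$ using $k=\evc(G')$ guards and simulate it on $G$ with $k+1$ guards by keeping one guard permanently on $v$ and keeping the other $k$ guards, at every moment, in the positions prescribed by a shadow run of the $G'$-strategy. Since any shadow configuration is a vertex cover of $G'$ and $v_1v_2\in E(G')$, at least one of $v_1,v_2$ is occupied, so the simulated configuration on $G$ (the shadow configuration together with $v$) is a vertex cover of $G$. An attack on an edge of $G$ not incident to $v$ is forwarded to the shadow and its response is mirrored directly, except that whenever the shadow moves a guard across the new edge $v_1v_2$, say from $v_2$ to $v_1$, the simulation instead moves that guard from $v_2$ to $v$ and simultaneously moves the guard on $v$ to $v_1$ (a ``reroute through $v$''), which keeps $v$ occupied and reproduces the shadow configuration. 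An attack on $vv_i$ with $v_i$ occupied is answered by swapping the guards on $v$ and $v_i$. An attack on $vv_i$ with $v_i$ unoccupied (so $v_{3-i}$ is occupied) is answered by forwarding an attack on $v_1v_2$ to the shadow strategy: the shadow is forced to move a guard from $v_{3-i}$ to $v_i$, and realising this crossing in $G$ via the reroute above also moves a guard across $vv_i$, thereby defending the attack. Hence $\evc_v(G)\le\evc(G')+1$.

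For the lower bound, fix a defense strategy on $G$ that keeps $v$ occupied in every configuration and uses $m=\evc_v(G)$ guards; we may assume that $v$ carries exactly one guard in every configuration (relocating any surplus guards on $v$ onto $v_1$, as in the footnote-style argument in the proof of Lemma~\ref{lem:pendent_vertex}). Then an attack on $vv_1$ shows that at least one of $v_1,v_2$ must be occupied in every configuration, so $\pi(C):=C\setminus\{v\}$ is a vertex cover of $G'$ with $m-1$ guards. Run a shadow copy of the $G$-strategy and maintain the invariant that the current $G'$-configuration equals $\pi$ applied to the current shadow configuration. An attack on an edge $e\neq v_1v_2$ of $G'$ is forwarded to the shadow; in the shadow's response the single guard on $v$ either stays put, or participates in a swap $v\leftrightarrow v_i$, or in a rotation $v_j\to v\to v_i$, and in each case applying $\pi$ yields a legal move on $G'$ that defends $e$, a rotation projecting to a guard crossing $v_1v_2$. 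An attack on $v_1v_2$ with both $v_1,v_2$ occupied in the shadow is answered by swapping their guards (the shadow is not advanced). An attack on $v_1v_2$ with exactly one of $v_1,v_2$, say $v_i$, occupied is answered by forwarding to the shadow an attack on $vv_{3-i}$: since $v_{3-i}$ is unoccupied, the shadow must move the guard from $v$ to $v_{3-i}$ and refill $v$ from $v_i$, which projects to a guard crossing $v_1v_2$ from $v_i$ to $v_{3-i}$, as needed. Hence $\evc(G')\le m-1$, completing the proof.

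The easy direction is the upper bound. The main obstacle is the lower bound, specifically verifying that $\pi$ always carries a legal move of the shadow strategy to a legal move on $G'$ defending the correct edge; the crux is that, for an attack on $v_1v_2$ with only one of $v_1,v_2$ occupied, one must present to the shadow the edge joining $v$ to the \emph{unoccupied} neighbour, so that the forced response is a rotation through $v$ which projects to a genuine crossing of $v_1v_2$. One must also keep track of the fact that $v$ carries exactly one guard, which guarantees that every configuration keeps one of $v_1,v_2$ occupied (so that $\pi(C)$ really is a vertex cover of $G'$) and which, in the version of the game allowing several guards per vertex, is arranged by pushing surplus guards off $v$ as usual.
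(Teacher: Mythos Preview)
Your proof is correct and follows essentially the same approach as the paper's: both directions are established via the natural bijection $C\leftrightarrow C\setminus\{v\}$ between configurations of $G$ with $v$ occupied and configurations of $G'$, with a move across $v_1v_2$ in $G'$ simulated by a two-step reroute through $v$ in $G$, and conversely. Your write-up spells out the case analysis for attacks on $vv_i$ and on $v_1v_2$ more explicitly than the paper does, but the underlying argument is the same.
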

\begin{proof}
 We first give a proof for the version of the problem where at most one guard is allowed on a vertex. 
 
Consider a minimum eternal vertex cover class $\mathcal{C}'$ of $G'$. 
Note that, at least one among $v_1$ and $v_2$  is occupied in each configuration of $\mathcal{C}'$. Let $\mathcal{C}=\{S'\cup \{v\} : S' \in \mathcal{C}' \}$. 
It is easy to see that each $S \in \mathcal{C}$ is a valid vertex cover of $G$.
If $S'_1$, $S'_2$ are configurations in $\mathcal{C}'$ obtainable from each other by a single step of valid movement of guards, it is also easy to verify that the corresponding configurations $S'_1\cup\{v\}$ and $S'_2 \cup \{v\}$ of $G$ are
obtainable from each other by a single step of valid movement of guards. 
From this, it follows easily that $\mathcal{C}$ is an eternal vertex cover class of $G$ with $v$ always occupied. Hence, $\evc_v(G) \le \evc(G')+1$. 

The correspondence between configurations of $G'$ and $G$ works in the reverse direction as well. Let $\mathcal{C}$ be an eternal vertex cover class of $G$  
in which $v$ is permanently occupied. 
Let $\mathcal{C}'=\{S \setminus \{v\} : S \in \mathcal{C} \}$.
Any configuration $S$ in $\mathcal{C}$ must contain at least one of $v_1$ and $v_2$. Therefore, the corresponding configuration $S'= S \setminus \{v\}$ is a vertex cover of $G'$.  Moreover, whenever the guards in $G$ move from a configuration ${S}_1$ to ${S}_2$ in $\mathcal{C}$ via a reconfiguration of guards which involves moving the guard from $v$ to $v_2$ and $v_1$ to $v$, 
we can simulate the behavior in $G'$ by assuming that the guard on $v_1$ is moving to $v_2$ (similarly while moving from $v$ to $v_1$ and $v_2$ to $v$). This corresponds
to a valid movement of guards in $G'$  from configuration $S_1 \setminus\{v\}$ to $S_2 \setminus \{v\}$. From this, it follows easily that $\mathcal{C}'$ is an eternal vertex cover class of $G'$. Thus, $\evc(G') \le \evc_v(G)-1$. 

Now, let us consider the version of the problem where more than one guard is allowed on a vertex. It is easy to modify the first part of the proof and show that $\evc_v(G) \le \evc(G')+1$. In the second part of the proof, if $S$ is a configuration in $\mathcal{C}$ such that for each $x \in V(G)$ there are exactly $t_x$ guards on $x$, then define the corresponding configuration $S'$ of $\mathcal{C}'$ to have $t_{v_1}+t_v-1$ guards on $v_1$ and for each $x \in V(G)\setminus\{v, v_1\}$, $t_x$ guards on $x$. This modification is sufficient to prove $\evc(G') \le \evc_v(G)-1$. 
\end{proof}
 \begin{definition}
 Let $X$ be the set of cut vertices of a graph $G$. If $B$ is a block of $G$, the set of $B$-components of $G$ is defined as 
 $\mathcal{C}_B(G)= \left \{ G_i: G_i \in \mathcal{C}_x(G) \text{ for some $x \in X \cap V(B)$} \text{ and $G_i$ edge disjoint with $B$}\right\}$.
 If $P$ is a path in $G$, then the set of $P$-components of $G$ is defined as\\
 $\mathcal{C}_P(G)=  \{ G_i: G_i \in \mathcal{C}_x(G) \text{ for some $x \in X \cap V(P)$} \text{ and $G_i$ edge disjoint}$ $\text{with $P$}\}$. 
 \end{definition}
 \begin{definition}
 For a block $B$ (respectively, a path $P$) of connected graph $G$, the type of a $B$-component (respectively, $P$-component) is its type with respect to the common vertex it has with $B$ (respectively, $P$). For $i \in \{1, 2\}$, we define $T_i(G, B)$ (respectively, $T_i(G, P)$) to be the set of all $B$-components (respectively, $P$-components) of $G$ that are Type~$i$.    
\end{definition}
If $B$ is a block (respectively, $P$ is a path) of a connected graph $G$, such that all $B$-components (respectively, $P$-components) of $G$ satisfy substructure property, then we can easily obtain a lower bound on the total number of guards on $\bigcup_{G_i \in \mathcal{C}_B(G)} V(G_i)$ (respectively, on $\bigcup_{G_i \in \mathcal{C}_P(G)} V(G_i)$) in any eternal vertex cover of $G$ or its extensions. The notation introduced below is to abstract this lower bound. 
 \begin{definition}
  For a block $B$ of connected graph $G$, we define $$\chi(G, B)=|V(B)\cap X|+\sum_{\substack{G_i \in T_1(G, B)\\ x_i \in V(G_i)\cap V(B)}}{(\evc_{x_i}(G_i)-2)}+
  \sum_{\substack{G_i \in T_2(G, B)\\x_i \in V(G_i)\cap V(B)}}{(\evc_{x_i}(G_i)-1)}$$
  Similarly, for a path $P$ of connected graph $G$, we define $$\chi(G, P)=|V(P)\cap X|+\sum_{\substack{G_i \in T_1(G, P)\\ x_i \in V(G_i)\cap V(B)}}{(\evc_{x_i}(G_i)-2)}+
  \sum_{\substack{G_i \in T_2(G, P)\\ x_i \in V(G_i)\cap V(B)}}{(\evc_{x_i}(G_i)-1)}$$.
 \end{definition}
\begin{remark}
 If $B$ is a block (respectively, $P$ is a path) of a connected graph $G$, such that all $B$-components (respectively, $P$-components) of $G$ satisfy substructure property, then the total number of guards on\\$\bigcup_{G_i \in \mathcal{C}_B(G)} V(G_i)$ (respectively, on $\bigcup_{G_i \in \mathcal{C}_P(G)} V(G_i)$) in any eternal vertex cover of $G$ or its extensions is at least $\chi(G, B)$ (respectively, $\chi(G, P)$).
\end{remark}
\begin{definition}[Vertex bunch of a path]
 Let $P$ be a path in a connected graph $G$. The vertex set $V(P) \cup \bigcup_{G_i \in \mathcal{C}_P(G)} V(G_i)$ is the vertex bunch of 
 $P$ in $G$, denoted by $\vb(P)$. 
\end{definition}
\begin{definition}[Eventful path]\label{def:eventful}
 Let $G$ be a connected graph and $X$ be the set of cut vertices of $G$. A path $P$ in a graph $G$ is an eventful path if
 \begin{itemize}
  \item $P$ is either an induced path in $G$ or a path obtained by removing an edge from an induced cycle in $G$.
  \item the endpoints of $P$ are in $X$ and
  \item any subpath $P'$ of $P$ with both endpoints in $X$ has $|V(P') \setminus X|$ even.   
 \end{itemize}
\end{definition}
\begin{lemma}\label{lemBigPath}
Let $G$ be a connected graph and let $P$ be an eventful path in $G$. 
Let $X$ be the set of cut vertices of $G$. If each $P$-component in $\mathcal{C}_P(G)$ satisfies the substructure property, 
then in any eternal vertex cover configuration of $G$, the total number of guards on $\vb(P)$ is at least $\frac{|V(P) \setminus X|}{2}+\chi(G, P)$.
 Moreover, if $\vb(P) \ne V(G)$ and the number of guards on $\vb(P)$ is exactly equal to the above expression, 
 then at least one of the neighbors of the endpoints of $P$ outside $\vb(P)$ has a guard on it.
\end{lemma}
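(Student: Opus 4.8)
The plan is to bound the number of guards on $\vb(P)$ by splitting $\vb(P)$ into pairwise disjoint pieces and bounding each piece separately. Write $X_P=V(P)\cap X$ for the cut vertices on $P$ and $W=V(P)\setminus X$ for the remaining (internal, non-cut) vertices of $P$. For each $x_i\in X_P$ let $B_i$ consist of $x_i$ together with $V(G')\setminus\{x_i\}$ over all $P$-components $G'$ of $G$ hanging at $x_i$. Three disjointness facts then give that $\vb(P)$ is the disjoint union of $W$ and the sets $B_i$: a non-cut vertex of $P$ lies in no $P$-component (these are edge-disjoint with $P$); for $i\ne j$ the sets $B_i\setminus\{x_i\}$ and $B_j\setminus\{x_j\}$ are disjoint, since two distinct cut vertices of $P$ cannot each separate one common vertex from $P$; and $x_i\notin B_j\setminus\{x_j\}$. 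So the guards on $\vb(P)$ split over $W$ and the $B_i$.

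For a $B_i$: fix a $P$-component $G'$ at $x_i$. Since $G$ is an $x_i$-extension of $G'$, and $G'$ satisfies the substructure property (so, by Remark~\ref{rmk:type_noncut}, is Type~1 or Type~2 with respect to $x_i$), any eternal vertex cover configuration $C$ of $G$ has at least $\evc_{x_i}(G')-1$ guards on $V(G')$ if $G'$ is Type~1 and at least $\evc_{x_i}(G')$ if $G'$ is Type~2. Summing over the $P$-components at $x_i$ and doing the short case split on whether $x_i$ is occupied in $C$ (an empty $x_i$ produces one extra guard per component, which exactly compensates for not counting $x_i$) yields $|C\cap B_i|\ge 1+\sum_{G'\in T_1}(\evc_{x_i}(G')-2)+\sum_{G'\in T_2}(\evc_{x_i}(G')-1)$; adding over $x_i\in X_P$ gives $\sum_i|C\cap B_i|\ge\chi(G,P)$. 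For $W$: in each segment $P_j$ of $P$ between consecutive cut vertices, the eventful-path parity condition forces $P_j$ to have an odd number of edges, hence an even number $2m_j$ of internal vertices; the $2m_j-1$ edges of $P_j$ whose both endpoints are internal form a path that must be covered by guards on those internal vertices, so $C$ has at least $m_j$ guards on $W\cap V(P_j)$. Summing gives $|C\cap W|\ge\tfrac12|V(P)\setminus X|$, and adding the two bounds proves the first assertion.

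For the ``moreover'' part, suppose $\vb(P)\ne V(G)$, the configuration $C$ meets the bound with equality, and, for contradiction, no external neighbour of an endpoint of $P$ is occupied in $C$. Since $G$ is connected there is an edge $uv$ with $u\in\vb(P)$, $v\notin\vb(P)$; as every interior vertex of a $P$-component has all its neighbours inside $\vb(P)$, we get $u\in V(P)$. The idea is to perform a single attack after which the number of guards on $\vb(P)$ falls below $\tfrac12|V(P)\setminus X|+\chi(G,P)$, contradicting the first assertion applied to the successor configuration (which is again an eternal vertex cover configuration of $G$). When $u$ is an endpoint $x_0$ of $P$, $v$ is an unoccupied external neighbour of $x_0$, so $x_0$ must be occupied to cover $uv$; attacking $uv$ forces the guard at $x_0$ to move to $v$, while no guard can enter $\vb(P)$ in that round, because every edge from $V(G)\setminus\vb(P)$ into $\vb(P)$ that is incident to an endpoint of $P$ has its outer end empty by assumption. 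Hence the count on $\vb(P)$ strictly decreases, which is the contradiction.

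The step I expect to be the main obstacle is exactly the structural control invoked at the end: showing that an edge leaving $\vb(P)$ is forced to be incident to an endpoint of $P$ (so that, under the standing assumption, no guard can re-enter $\vb(P)$ during the chosen round). This is where the hypothesis that $P$ is an induced path, or an induced cycle minus an edge, together with the eventful-path parity condition on the internal cut vertices of $P$, must be used; in the residual situation where an edge leaves $\vb(P)$ at an interior vertex of $P$ one has to argue separately that the equality hypothesis cannot hold there, so that the statement is vacuously true in that case. By contrast, the two counting bounds of the first part are routine once the decomposition and the disjointness facts are set up.
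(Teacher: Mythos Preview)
Your argument follows the paper's line almost exactly. For the first assertion the paper also splits $P$ into the subpaths between consecutive cut vertices, uses the parity condition to get $\tfrac12|V(P)\setminus X|$ guards on the non-cut vertices, and then adds the substructure bound for the $P$-components; its write-up is a single sentence, but the content matches your decomposition into $W$ and the $B_i$. For the ``moreover'' part the paper proceeds by the same contradiction: attack an edge $xv$ with $x$ an endpoint of $P$ and $v\notin\vb(P)$, observe that a guard must leave $\vb(P)$, and assert that ``no guards can move to $\vb(P)$ from outside $\vb(P)$''.

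The step you flag as the main obstacle---that every edge leaving $\vb(P)$ is incident to an endpoint of $P$---is precisely what the paper asserts without justification. So you have not missed an argument the paper provides; you have simply been explicit about where the proof is thin. In the paper's actual use of the lemma (subpaths of a block that is a cycle), this structural claim is immediate from the block/cut-vertex structure, which is presumably the intended reading. Your suggestion that the induced-path/parity hypotheses alone should force it is optimistic; it is really the ambient block structure that does the work.

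One small caveat on your first part: the remark that ``an empty $x_i$ produces one extra guard per component'' needs at least one $P$-component at $x_i$, which is not automatic for an arbitrary connected $G$ (an internal cut vertex on an induced path can have none). The paper's terse proof makes the same tacit assumption, and again it is harmless in the cycle-block setting where the lemma is applied.
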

\begin{proof}
 Consider any subpath $P'$ of $P$ such that both endpoints of $P'$ are in $X$ and none of its intermediate vertices are from $X$. 
 Let $C$ be an eternal vertex cover configuration of $G$. 
 Since $P$ is eventful, $|V(P') \setminus X|$ is even and in any vertex cover of $G$, at least $\frac{|V(P') \setminus X|}{2}$ internal
 vertices of $P'$ must be present. Using this along with the substructure property of $P$-components proves the first part of the lemma.
 
  Now, suppose $\vb(P) \ne V(G)$ and the number of guards  in the configuration $C$ on $\vb(P)$ is exactly equal to the expression given in the lemma. 
  Now, for contradiction, let us assume that none of the neighbors of the endpoints of $P$ outside $\vb(P)$ 
  has a guard in $C$. Consider an attack on an edge $xv$, where $x$ is an endpoint of $P$ and $v$ is a neighbor of $x$ outside $\vb(P)$. 
  To defend this attack, a guard must move from $x$ to $v$. Note that, no guards can move to $\vb(P)$ from outside $\vb(P)$. Hence, 
  while defending the attack, the number of guards on $\vb(P)$ decreases at least by one. But, then the new configuration will violate the first part of the lemma.
  Hence, it must be the case that at least one of the neighbors of the endpoints of $P$ outside $\vb(P)$ has a guard in $C$. 
\end{proof}
\begin{definition}[Maximal uneventful path]
 Let $G$ be a connected graph and let $X$ be the set of cut vertices of $G$. 
 A path $P$ in $G$ is a maximal uneventful path in $G$ if 
 \begin{itemize}
  \item $V(P) \cap X = \emptyset$
  \item $|V(P)|$ is odd and
  \item $P$ is a maximal induced path in $G$ satisfying the above two conditions. 
 \end{itemize}
\end{definition}
The next lemma is applicable to any connected graph $G$ that contains a block $B$ which is a cycle such that all $B$-components satisfy the substructure property. 
Since each block of a cactus is either a cycle or an edge, this lemma will be useful for computing the eternal vertex cover number of cactus graphs. The proof of the lemma makes use of the fact that $B$ can be partitioned into a collection of edge disjoint paths which are either eventful paths or maximal uneventful paths.
\begin{lemma}\label{lem:evc_block}
Let $B$ be a cycle forming a block of a connected graph $G$ and let $X$ be the set of cut vertices of $G$. 
 Suppose each $B$-component $G_i$ of $G$ that belongs to $\mathcal{C}_B(G)$ satisfies the substructure property. If $T_1(G, B)=\emptyset$, then $\evc(G)=\left\lceil{\frac{|V(B)\setminus X|}{2}}\right\rceil+\chi(G, B)$. Otherwise, $\evc(G)=\left\lceil{\frac{|V(B)\setminus X|+1}{2}}\right\rceil+\chi(G, B)$.
\end{lemma}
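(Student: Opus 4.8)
The plan is to decompose the cycle $B$ into edge-disjoint subpaths, apply Lemma~\ref{lemBigPath} to the eventful ones, handle the maximal uneventful ones separately, and then stitch the pieces together into a global lower bound; the matching upper bound will come from an explicit defense strategy. First I would fix the orientation of the cycle $B=v_0v_1\cdots v_{m-1}v_0$ and consider the cyclic sequence of cut vertices $V(B)\cap X$. Walking around $B$, these cut vertices partition $B$ into arcs. Using the parity information, I would group consecutive arcs into a collection of edge-disjoint paths each of which is either an eventful path (endpoints in $X$, every subpath with both endpoints in $X$ having an even number of internal non-cut vertices) or a maximal uneventful path (no cut vertices, odd number of vertices). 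The existence of such a partition — which the paper explicitly flags as the engine of the proof — is the first thing I would nail down: starting from any cut vertex and greedily extending an eventful path until parity forces a break, one either returns to the starting cut vertex or is left with one leftover arc of non-cut vertices; that leftover arc, after possibly shortening by one vertex at an end, is a maximal uneventful path, and the $\lceil\cdot\rceil$ versus $\lceil\frac{\cdot+1}{2}\rceil$ dichotomy in the statement is exactly the accounting of whether such an odd leftover arc is present and whether any $B$-component hanging off $B$ is Type~1.

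For the lower bound, I would sum the contributions. Each eventful subpath $P'$ contributes at least $\frac{|V(P')\setminus X|}{2}+\chi(G,P')$ guards on $\vb(P')$ by Lemma~\ref{lemBigPath}; each maximal uneventful subpath of odd length $2\ell+1$ forces at least $\ell+1$ guards on its vertices in any vertex cover (it is an induced path, so alternate vertices must be covered, and oddness forces the extra one). Since the vertex bunches of distinct subpaths overlap only at the shared cut vertices of $B$, and those cut vertices are counted once each inside the $\chi$ terms via $|V(B)\cap X|$, adding the contributions and being careful about the $x$'s gives $\ge \lceil\frac{|V(B)\setminus X|}{2}\rceil+\chi(G,B)$ (the ceiling appearing precisely when the number of non-cut vertices of $B$ is odd, i.e.\ when an uneventful leftover exists), and $\ge\lceil\frac{|V(B)\setminus X|+1}{2}\rceil+\chi(G,B)$ when a Type~1 $B$-component is present, since then — as in the proof of Lemma~\ref{lem:cut-vertex} — an extra guard must be available at some moment to service repeated attacks inside that Type~1 component while keeping its attachment vertex occupied. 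The "moreover" clause of Lemma~\ref{lemBigPath} is what lets me propagate the no-free-neighbor condition across consecutive eventful pieces so that the local bounds can be added without double counting slack.

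For the upper bound, I would exhibit a configuration and a reconfiguration rule realizing the claimed value. Place $\lceil\frac{|V(B)\setminus X|}{2}\rceil$ guards on $B$ so that they dominate $B$ (every edge of $B$ covered), with all cut vertices of $B$ occupied; additionally keep, for each $B$-component $G_i$, exactly $\evc_{x_i}(G_i)-1$ or $\evc_{x_i}(G_i)$ guards inside $G_i$ according to its type, forming (an induced configuration of) an eternal vertex cover of $G_i$ or of $(G_i)_x^+$ respectively, exactly as in Lemma~\ref{lem:cut-vertex}; when $T_1(G,B)\neq\emptyset$, earmark one extra guard that normally sits on $B$ (or in the distinguished Type~1 component). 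An attack inside a $B$-component is defended by moving guards within that component (and, if it is Type~1, borrowing the earmarked guard via the substructure property and returning it), exactly mirroring the Lemma~\ref{lem:cut-vertex} argument. An attack on an edge of $B$ is defended by a rotation of the guards along $B$ (a cycle on an even number of free vertices defends with half as many guards, and the extra guard absorbs the parity defect when the count of non-cut vertices is odd or a Type~1 component is present), keeping all cut vertices occupied throughout. Checking that these two kinds of moves never conflict — i.e.\ that a rotation on $B$ and a repair inside a component are never simultaneously required and that the invariants survive each move — is the routine but necessary bookkeeping.

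The main obstacle I expect is the partition step together with the interface accounting: proving that $B$ really does decompose into edge-disjoint eventful and maximal uneventful subpaths whose vertex bunches meet only at cut vertices, and then showing that summing the per-piece lower bounds from Lemma~\ref{lemBigPath} yields \emph{exactly} the global ceiling expression with no off-by-one error — in particular that the single ceiling (or the $+1$ inside the ceiling in the Type~1 case) is the correct global correction rather than an accumulation of per-piece corrections. Getting the parity of $|V(B)\setminus X|$ to interact correctly with the number of eventful pieces, and reconciling the "moreover" clauses at the junctions, is where the care is needed; the rest follows the template already established for cut vertices in Lemma~\ref{lem:cut-vertex}.
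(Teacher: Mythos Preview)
Your overall plan coincides with the paper's: split $B$ into eventful and maximal uneventful subpaths, apply Lemma~\ref{lemBigPath} to the former, use a vertex-cover bound on the latter, glue the pieces via the ``moreover'' clause, then exhibit a defense maintaining explicit invariants. Two concrete steps in your lower-bound argument are wrong, however, and the proof would fail as written.

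First, the decomposition is not ``one eventful path plus at most one leftover''. Each arc of $B$ between two consecutive cut vertices independently carries an even or odd number of non-cut vertices, and by Definition~\ref{def:eventful} an eventful path cannot traverse any odd arc (every subpath between cut vertices must have evenly many non-cut vertices). Hence every odd arc is, in its entirety, a maximal uneventful path, and there can be many of them; the paper calls their number $l$ and uses only that $l$ has the same parity as $n-k=|V(B)\setminus X|$. Your greedy construction of a single long eventful path breaks as soon as two or more arcs are odd (e.g.\ two cut vertices on a $6$-cycle at distance $2$, giving arcs of $1$ and $3$ non-cut vertices).

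Second, a maximal uneventful path on $2\ell+1$ vertices has minimum vertex cover $\ell$, not $\ell+1$: the $\ell$ alternate interior vertices cover all its edges while leaving both endpoints unguarded. So the extra guard that produces the ceiling cannot be charged to the vertex cover of a single uneventful path. In the paper's argument the extra comes from the ``moreover'' clause of Lemma~\ref{lemBigPath} itself: if two consecutive uneventful paths $P_i,P_{i+1}$ are both at their $\lfloor\cdot\rfloor$ bound (hence endpoints unguarded) and the eventful path $P$ between them is also tight, then $P$ has no guarded neighbour outside $\vb(P)$, contradicting the clause. Running this constraint cyclically over all $l$ uneventful paths and counting is what forces the global $\lceil(n-k)/2\rceil$. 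In your write-up the moreover clause appears only as a device to ``avoid double counting slack'', not as the actual source of the extra guard; combined with the incorrect $\ell+1$ per-piece bound you would over-count whenever $l>1$.
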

\begin{proof}
Let $|V(B)|=n$ and $|X \cap V(B)|=k$. For each $B$-component $G_i$ of $G$ that belongs to $\mathcal{C}_B(G)$, let $x_i$ be the vertex that $G_i$ has in common with $B$. 
 Let $\mathcal{C}$ be a minimum eternal vertex cover of $G$. Note that the condition stated in Lemma~\ref{lemBigPath} has to simultaneously 
 hold for all subpaths of the cycle that are eventful in $G$. 
 
  Let $l\ge 0$ be the number of maximal uneventful paths in $B$ and let $P_1, P_2, \ldots, P_l$ be these listed in the cyclic order along $B$.
 To protect the edges within each $P_i$, $V(P_i)$ should contain at least $\mvc(P_i)=\lfloor\frac{|V(P_i)|}{2}\rfloor$ guards. If there are 
 exactly $\lfloor\frac{|V(P_i)|}{2}\rfloor$ guards on $V(P_i)$, 
 the end vertices of $P_i$ are not occupied and alternate vertices in $P_i$ are occupied by guards. 
 \begin{itemize}
   \item Case 1 : $T_1(G, B)=\emptyset$ 
  \begin{enumerate}
   \item[a.] when $n-k$ is odd. \\
   In this case, from Definition~\ref{def:eventful}, it follows  that $l$ is odd. 
  \begin{itemize}
   \item  Suppose $l=1$. Let $P$ be the subpath obtained from $B$ by deleting the edges of $P_1$. It is easy to see that $P$ is an eventful path. 
   If $V(P_1)$ contains only 
 $\lfloor\frac{|V(P_1)|}{2}\rfloor$ guards, then the end vertices of $P_1$ are not occupied by guards and the condition stated in Lemma \ref{lemBigPath} cannot hold for $P$. Therefore, the lemma holds when $l=1$. 
   \item  Suppose $l > 1$.  Then, if $V(P_i)$ and $V(P_{i+1})$ (+ is mod $l$) respectively contains only $\lfloor\frac{|V(P_i)|}{2}\rfloor$ and $\lfloor\frac{|V(P_{i+1})|}{2}\rfloor$ guards and
   the vertex bunch of the path $P$ between the last vertex of $P_i$ and the first vertex of $P_{i+1}$
contains exactly as many guards as mentioned in the first part of 
Lemma~\ref{lemBigPath}, the condition stated in the second part of 
Lemma~\ref{lemBigPath} cannot hold for $P$. 
Since this is true for all $1 \le i \le l$, and the condition stated in Lemma~\ref{lemBigPath} has to simultaneously 
 hold for all subpaths of the cycle that are eventful in $G$, a simple counting argument shows that number of guards in $\mathcal{C}$ should be at least 
 $\left\lceil{\frac{n-k}{2}}\right\rceil+\chi(G, B)$.

Thus, we know that $\evc(G)$ is at least the expression given above. If $T_1(G, B)=\emptyset$, it is easy to show that these many guards are also sufficient. The guards on $V(B)$ can defend any attack on edges of $B$ while keeping $X \cap V(B)$ always occupied. Attacks on edges of $B$-components can be handled, maintaining $x_i$ always occupied and having exactly $\evc_{x_i}(G_i)$ guards on each $B$-component $G_i$. 
\end{itemize}
\item[b.] When $n-k$ is even\\
In this case, $l$ is also even. A similar counting argument will show that $\evc(G)$ is given by the expression in the statement of the lemma. 
\end{enumerate}
\item Case 2 : $T_1(G, B) \ne \emptyset$ 
\begin{enumerate}
 \item[a.] when $n-k$ is even. \\ 
Suppose there are exactly $\frac{n-k}{2}+\chi(G, B)$ guards on $G$. Then, it is not difficult to see that in order to satisfy the condition stated in Lemma~\ref{lemBigPath} 
for all eventful subpaths of the cycle, the configuration of guards should be such that 
\begin{itemize}
\item All cut vertices have guards.
\item while going around the cycle $B$ (discarding the cut vertices), non-cut vertices are alternately guarded and unguarded. 
\item each $G_i \in T_1(G, B)$ contains exactly $\evc_{x_i}(G_i)-1$ guards
\item each $G_i \in T_2(G, B)$ contains exactly $\evc_{x_i}(G_i)$ guards. 
\end{itemize}
Now, consider a Type~1 $B$-component $G_i$. 
By the conditions listed above, $x_i$ must be occupied in every configuration. Hence, there is a sequence of attacks on $G_i$ that would eventually lead to a configuration with at least $\evc_{x_i}(G_i)$ guards on $V(G_i)$ to defend the attack. However, in that configuration, the conditions listed above will not be satisfied. Hence, we need at least one more guard. 
We argue below that with one more guard, we can maintain the following invariants in every configuration:
\begin{itemize}
\item All cut vertices have guards
\item while going around the cycle $B$ (discarding the cut vertices), non-cut vertices are alternately guarded and unguarded
\item one $B$-component $G_i \in T_1(G, B)$ contains exactly $\evc_{x_i}(G_i)$ guards.
\item all other $G_j \in T_1(G, B)$ contain exactly $\evc_{x_j}(G_j)-1$ guards each.
\item each $G_i \in T_2(G, B)$ contains exactly $\evc_{x_i}(G_i)$ guards. 
\end{itemize}
When there is an attack on an edge of a Type~1 $B$-component $G_j$ containing only $\evc_{x_j}(G_j)-1$ guards, we need to reconfigure guards in such a way that after the reconfiguration, the Type~1 $B$-component $G_i$ that presently has $\evc_{x_i}(G_i)$ guards will have one less guard and $G_j$ gets one more guard. Since $G_i$ and $G_j$ are Type~1, this is always possible by an appropriate shifting of guards through the cycle $B$. 
In this way, the invariants stated above can be maintained consistently. Hence, $\evc(G)$ is as given in the statement of the lemma.
\item[b.] when $n-k$ is odd.\\
In this case, note that $\lceil\frac{n-k}{2}\rceil = \lceil\frac{n-k+1}{2}\rceil$. As noted earlier, $l$ is odd and $l \ge 1$.  
Using similar arguments as in the case when there are no Type~1 $B$-components,
we can show that $\evc(G)$ is at least $\left\lceil{\frac{n-k}{2}}\right\rceil+\chi(G, B)$.  With these many guards, it is possible to protect $G$ keeping the following invariants in all configurations.
\begin{itemize}
\item All cut vertices have guards
\item at most one $G_i \in T_1(G, B)$ contains exactly $\evc_{x_i}(G_i)$ guards 
\item all other $G_j \in T_1(G, B)$ contain exactly $\evc_{x_j}(G_j)-1$ guards
\item the maximal uneventful paths $P$ and $Q$ in $B$ that are respectively clockwise and anticlockwise nearest to the Type~1 $B$-component with $\evc_{x_i}(G_i)$ guards (if it exists)
respectively contain $\lfloor\frac{|V(P)|}{2}\rfloor$ and $\lfloor\frac{|V(Q)|}{2}\rfloor$ guards
\item each $G_i \in T_2(G, B)$ contains exactly $\evc_{x_i}(G_i)$ guards. 
\end{itemize}
\end{enumerate}
Hence, in this case also, the lemma holds.
\end{itemize}
 \end{proof}
 
 The following observation gives a method to compute the type of a graph with respect to degree-two vertices in blocks which are cycles.  
 \begin{lemma}\label{lem:type_degree_two}
  Let $B$ be a cycle of $n$ vertices, forming a block of a connected graph $G$. Let $X$ be the set of cut vertices of $G$ and let $k=|X\cap V(B)|$. 
  Suppose each $B$-component in $\mathcal{C}_B(G)$ satisfies the substructure property. 
  Let $v \in V(B) \setminus X$.  The type of $G$ with respect to $v$ can be computed as follows. 
  \begin{itemize}
   \item If $T_1(G, B) \ne \emptyset$ and $n-k$ is even, then $\evc_v(G)=\evc(G)=\evc(G_v^+)$.
   If $T_1(G, B) \ne \emptyset$ and $n-k$ is odd, then $\evc_v(G)=\evc(G)+1=\evc(G_v^+)$.
   In both cases, $G$ is Type~1 with respect to $v$. 
   \item If $T_1(G, B) = \emptyset$ and $n-k$ is even, then $\evc_v(G)=\evc(G)+1=\evc(G_v^+)$
   and $G$ is Type~1 with respect to $v$. If $T_1(G, B) = \emptyset$ and $n-k$ is odd, then $\evc_v(G)=\evc(G)<\evc(G_v^+)=\evc_v(G)+1$ and $G$ is Type~2 with respect to $v$.
  \end{itemize}
\end{lemma}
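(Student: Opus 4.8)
The plan is to obtain closed forms for the three relevant quantities $\evc(G)$, $\evc_v(G)$ and $\evc(G_v^+)$ in terms of $m := n-k = |V(B)\setminus X|$, of $\chi(G,B)$, and of whether $T_1(G,B)$ is empty, and then read the statement off by comparison. First, $\evc(G)$ is given directly by Lemma~\ref{lem:evc_block}: writing $\chi$ for $\chi(G,B)$, it equals $\lceil m/2\rceil+\chi$ if $T_1(G,B)=\emptyset$ and $\lceil (m+1)/2\rceil+\chi$ otherwise.

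Next I would compute $\evc(G_v^+)$, again via Lemma~\ref{lem:evc_block}. Adding the pendant vertex $u$ adjacent to $v$ makes $v$ a cut vertex, so in $G_v^+$ the block $B$ is still a cycle but its set of cut vertices becomes $X\cup\{v\}$, whence $|V(B)\setminus(X\cup\{v\})| = m-1$; moreover the $B$-components of $G_v^+$ are exactly those of $G$ together with the new pendant edge $uv$. Since $uv$ is Type~1 with respect to $v$ with $\evc_v(uv)=2$, it contributes $0$ to the Type~1 sum in $\chi$ while adding one to $|V(B)\cap X|$, so $\chi(G_v^+,B)=\chi+1$ and, importantly, $T_1(G_v^+,B)\neq\emptyset$ unconditionally. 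Lemma~\ref{lem:evc_block} then gives $\evc(G_v^+)=\lceil m/2\rceil+\chi+1$ in every case.

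For $\evc_v(G)$ I would use Lemma~\ref{lem:ContractVertex}. When $n\ge 4$ the two neighbours $v_1,v_2$ of $v$ on $B$ are non-adjacent, so deleting $v$ and adding the edge $v_1v_2$ produces a graph $G'$ with $\evc_v(G)=\evc(G')+1$; here $B$ becomes a cycle $B'$ on $n-1$ vertices whose cut-vertex set, whose $B'$-components, and whose component types all coincide with those of $B$ in $G$ (the components are attached at the same, unchanged, vertices), so $\chi(G',B')=\chi$, $T_1(G',B')=T_1(G,B)$ and $|V(B')\setminus X|=m-1$. Lemma~\ref{lem:evc_block} applied to $G'$ thus yields $\evc_v(G)=\lceil (m-1)/2\rceil+\chi+1$ when $T_1(G,B)=\emptyset$ and $\evc_v(G)=\lceil m/2\rceil+\chi+1$ otherwise. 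The boundary case $n=3$, where $v_1v_2$ is already an edge of $G$, falls outside the hypothesis of Lemma~\ref{lem:ContractVertex} and has to be treated separately: the correspondence of that lemma still works with $G'=G-v$ (the edge $v_1v_2\in E(G)$ already forces one of $v_1,v_2$ to be occupied in every configuration of $G-v$, which is exactly what is used to simulate moves through $v$), but now $B$ degenerates to an edge block of $G'$, so $\evc(G-v)$ must be evaluated via the cut-vertex analysis rather than Lemma~\ref{lem:evc_block}. This is the one point that needs genuine extra care.

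With the three formulas in hand the lemma is essentially arithmetic. Since $\lceil (m-1)/2\rceil\le\lceil m/2\rceil$ with equality exactly when $m$ is even, we get $\evc_v(G)=\evc(G_v^+)$ precisely when $T_1(G,B)\neq\emptyset$, or when $T_1(G,B)=\emptyset$ and $m$ is even, and $\evc(G_v^+)=\evc_v(G)+1$ exactly in the remaining case $T_1(G,B)=\emptyset$ with $m$ odd; by the definition of Type, this already gives the Type~1/Type~2 dichotomy asserted. Comparing each of $\evc_v(G)$ and $\evc(G_v^+)$ against $\evc(G)$, after simplifying the ceilings according to the parity of $m$, then produces the four precise equalities and the strict inequality in the statement. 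The only real obstacle is keeping the floor/ceiling bookkeeping straight across the four cases together with the $n=3$ boundary case; everything else is a direct appeal to Lemmas~\ref{lem:ContractVertex} and~\ref{lem:evc_block}.
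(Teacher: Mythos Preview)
Your proposal is correct and follows essentially the same route as the paper: compute $\evc(G)$, $\evc(G')$ (and hence $\evc_v(G)$ via Lemma~\ref{lem:ContractVertex}), and $\evc(G_v^+)$ by three applications of Lemma~\ref{lem:evc_block}, then compare case by case according to the parity of $n-k$ and whether $T_1(G,B)$ is empty. Your explicit treatment of the $n=3$ boundary (where the neighbours of $v$ are already adjacent, so Lemma~\ref{lem:ContractVertex} does not literally apply) is in fact more careful than the paper, which silently glosses over this degenerate case.
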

\begin{proof}
 This can be proved using Lemma~\ref{lem:evc_block} and Lemma~\ref{lem:ContractVertex}.
 Let $G'$ be the graph obtained by deleting the two edges incident on $v$ from $G$ and adding an edge between its neighbors. Let $B'$ be the cycle obtained
by deleting the two edges incident on $v$ from $B$ and adding an edge between its neighbors. 
We have $|X \cap V(B)|=k=|X \cap V(B')|$ and $|V(B')|=|V(B)|-1=n-1$. 
Further, if $\tilde{X}$ is the set of cut vertices of $G_v^+$, then $|\tilde{X}|=|X+1|=k+1$. 
 \begin{itemize}
  \item If $T_1(G, B) \ne \emptyset$, then using Lemma~\ref{lem:evc_block}
 for $G$ and $G'$, we can see that when $n-k$ is odd, $\evc(G')=\evc(G)$ and when $n-k$ is even, $\evc(G')=\evc(G)-1$. 
 Therefore, by Lemma~\ref{lem:ContractVertex}, when $n-k$ is odd, $\evc_v(G)=\evc(G)+1$ and when $n-k$ is even, $\evc_v(G)=\evc(G)$.
 Further, using Lemma~\ref{lem:evc_block} for $G_v^+$, we can see that when $n-k$ is odd, $\evc(G_v^+)=\evc(G)+1$ and 
 when $n-k$ is even, $\evc(G_v^+)=\evc(G)$. Thus, in both cases, $\evc_v(G)=evc(G_v^+)$ and hence, $G$ is Type~1 with respect to $v$.
 \item If $T_1(G, B) = \emptyset$, then using Lemma~\ref{lem:evc_block} for $G$ and $G'$, we can see that
 when $n-k$ is even, $\evc(G')=\evc(G)$ and when $n-k$ is odd, $\evc(G')=\evc(G)-1$. 
 Therefore, by Lemma~\ref{lem:ContractVertex}, when $n-k$ is even, $\evc_v(G)=\evc(G)+1$ and when $n-k$ is odd, $\evc_v(G)=\evc(G)$.
 Further, using Lemma~\ref{lem:evc_block} for $G_v^+$, we can see that when $n-k$ is even, $\evc(G_v^+)=\evc(G)+1$ and when $n-k$ is odd, then $\evc(G_v^+)=\evc(G)+1$.
 Therefore, when $n-k$ is even, $\evc_v(G)=\evc(G_v^+)$ and when $n-k$ is odd, $\evc(G_v^+)=\evc_v(G)+1$.
 Hence, when $n-k$ is even, $G$ is Type~1 with respect to $v$ and when $n-k$ is odd, $G$ is Type~2 with respect to $v$.
 \end{itemize}
\end{proof}
\section{Computing eternal vertex cover number of cactus graphs}\label{sec:cactus}
\begin{theorem}\label{thm:cactus-substructure}
Every cactus graph satisfies substructure property.
\end{theorem}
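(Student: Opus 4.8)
The plan is to prove Theorem~\ref{thm:cactus-substructure} by induction on the number of vertices of the cactus $G$. The base case is a single edge (an isolated vertex being vacuous): for an edge $xy$ with $x$ non-cut, any $x$-extension still contains the edge $xy$, which must be covered, so $V(G)=\{x,y\}$ carries at least one guard, matching $\evc_x(G)-1=1$, and the edge is Type~1 with respect to $x$. It is worth recording the cycle case directly as well, since the inductive step below reduces a cycle to a shorter cycle: for $C_n$ every vertex is non-cut, every eternal vertex cover of an $x$-extension covers all edges of $C_n$ and so carries at least $\mvc(C_n)=\lceil n/2\rceil$ guards on $V(C_n)$, and using Lemma~\ref{lem:ContractVertex} together with Lemma~\ref{lem:evc_block} (with no block-components) one checks that this is exactly $\evc_x(C_n)-1$ when $n$ is even ($C_n$ then being Type~1) and exactly $\evc_x(C_n)$ when $n$ is odd ($C_n$ then being Type~2).

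For the inductive step fix a non-cut vertex $x$ of $G$. Since $G$ is a cactus, $x$ lies in a unique block $B_0$, which is either a pendant edge, so that $x$ is a leaf with a single neighbour $y$, or a cycle, so that $x$ has degree $2$ with both neighbours $v_1,v_2$ on $B_0$. In the leaf case I would pass to the strictly smaller cactus $W:=G\setminus x$ with distinguished vertex $y$: Lemma~\ref{lem:pendent_vertex} transfers the type of $G$ at $x$ from that of $W$ at $y$, and its proof yields $\evc_x(G)=\evc_y(W)+1$. In the degree-$2$ case I would pass to the cactus $\hat G$ obtained by deleting $x$ and adding the edge $v_1v_2$ (with a small separate treatment of the triangle subcase, where $v_1v_2$ is already present): Lemma~\ref{lem:ContractVertex} gives $\evc_x(G)=\evc(\hat G)+1$ and Lemma~\ref{lem:type_degree_two} gives the type of $G$ at $x$. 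In either case the induction hypothesis guarantees the substructure property for the smaller cactus and for all cut-vertex components encountered along the way, so that these lemmas, as well as Lemma~\ref{lem:cut-vertex} and Observation~\ref{obs:type_cut_vertex} for the cut-vertex bookkeeping, are legitimately applicable; together they pin down the values of $\evc_x(G)$ and of the type of $G$ at $x$ that occur in Definition~\ref{def:substructure}.

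The substantive step is to verify the guard-count lower bound of Definition~\ref{def:substructure} for an arbitrary $x$-extension $G'$ of $G$ and an arbitrary eternal vertex cover configuration $C'$ of $G'$. The key point is that in $G'$ the attachment vertex ($y$ in the leaf case, a suitable vertex of $B_0$ in the degree-$2$ case) is a cut vertex of $G'$, and the smaller structure — the cactus $W$, respectively the cycle block $B_0$ together with the cacti hanging from it — appears as a union of components across that cut vertex; hence the induction hypothesis bounds below the number of guards $C'$ must place on $V(W)$ (respectively, Lemma~\ref{lemBigPath}, applied to eventful subpaths of $B_0$ whose $P$-components are proper sub-cacti and so are covered by the induction hypothesis, bounds below the guards on $\vb(B_0)$), and when the attachment vertex is itself a cut vertex of the smaller cactus one first splits at it and sums the component-wise bounds via Lemma~\ref{lem:cut-vertex}. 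Adding the contribution of the guard on $x$ itself then gives the required bound whenever $x$ is occupied in $C'$ (and also whenever the attachment vertex is unoccupied, which forces $x$ to be occupied). Conceptually the whole analysis can be packaged as the single inequality $\evc(G_x^+)\le 1+\min_{C\in\sigma}\,\bigl(\text{guards of }C\text{ on }V(G)\bigr)$, valid for every defence strategy $\sigma$ on every $x$-extension $G'$; Definition~\ref{def:substructure} then follows from this using only Observation~\ref{obs:evc_extension_ub} and the Type~1/Type~2 dichotomy.

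The main obstacle — and where cactus structure is genuinely needed — is the remaining case, where $x$ is unoccupied in $C'$; there the counting above falls exactly one guard short of what is required. To recover the missing guard I would use that an unoccupied $x$ forces its neighbour $y$ (leaf case), or both cycle-neighbours of $x$ on $B_0$ (degree-$2$ case), as well as at least one neighbour of $x$ inside the extension, to be occupied, and then chase an adversarial sequence of attacks confined to the edges of $G$ and the edges incident with $x$: since $x$ is a cut vertex of $G'$, no guard can enter $V(G)$ except through $x$, so by repeatedly threatening the edge $xy$ (respectively the edges of $B_0$ at $x$) the adversary prevents the defender from profitably replenishing $V(G)$ through $x$, and then the minimality built into $\evc_y(W)$ — respectively the parity condition in Definition~\ref{def:eventful} and the tightness clauses of Lemma~\ref{lemBigPath} and Lemma~\ref{lem:evc_block} — forces an extra guard to accumulate inside $V(G)$, contradicting the assumed tightness. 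Carrying out this forcing argument carefully, in particular handling in the degree-$2$ case the fact that in $G'$ a non-cactus extension may hang from a vertex of the cycle block $B_0$ so that one must work with eventful subpaths chosen to avoid $x$, is the delicate part of the proof.
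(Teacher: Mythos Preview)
Your overall architecture—induction with a case split on whether the non-cut vertex $x$ is a leaf or a degree-$2$ vertex on a cycle block, together with Lemma~\ref{lem:cut-vertex}, Lemma~\ref{lem:pendent_vertex}, Lemma~\ref{lem:ContractVertex}, Lemma~\ref{lemBigPath}, Lemma~\ref{lem:evc_block}, and Lemma~\ref{lem:type_degree_two}—is exactly the paper's. The leaf case is handled the same way (a single attack on the pendant edge when $x$ is unoccupied suffices; ``repeatedly threatening'' is more than is needed, but the one-step version of what you describe is correct).

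The genuine gap is in the degree-$2$ case. You frame the subcase ``$x$ unoccupied in $C'$'' as the main obstacle and propose a repeated-attack forcing argument to ``force an extra guard to accumulate inside $V(G)$.'' As written this is both unnecessary and internally inconsistent: you simultaneously claim the adversary prevents replenishment of $V(G)$ through $x$ and that an extra guard is forced into $V(G)$. More importantly, the substructure property is a statement about \emph{every} configuration $C'$, so a sequence of attacks moves you to other configurations and does not directly bound the guard count in the original $C'$; any forcing must be a one-step argument of the form ``after one defended attack, the resulting configuration violates a lower bound that we already know holds for all configurations,'' as in the leaf case.

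The paper avoids this entirely. In the degree-$2$ case it does \emph{not} condition on whether $x$ is occupied. Since $x$ is a non-cut vertex of $G$ lying on the cycle $B$, the Lemma~\ref{lemBigPath}/Lemma~\ref{lem:evc_block}-style count is over all of $V(B)$, including the arc through $x$: one takes the path $P$ in $B$ between the nearest cut vertices $p,q$ on either side of $x$ that avoids $x$, applies the induction hypothesis to the $B$-components (each of which has $G'$ as an extension), uses vertex-cover counting on the non-cut segments of $B$ (which includes the segment through $x$), and combines these exactly as in the proof of Lemma~\ref{lem:evc_block}. The resulting lower bound on the number of guards on $V(G)$ in $C'$ is then matched, case by case via Lemma~\ref{lem:type_degree_two}, against $\evc_x(G)-1$ (Type~1) or $\evc_x(G)$ (Type~2). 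No separate treatment of the $x$-unoccupied case is required, and your ``conceptual inequality'' $\evc(G_x^+)\le 1+\min_{C}(\text{guards on }V(G))$ is precisely the statement to be proved, not a shortcut to it.
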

\begin{proof}
Let $G$ be a cactus graph. 
 The proof is using an induction on the number of cut-vertices in $G$. 
 
 In the base case, $G$ is a  cactus without a cut vertex. Then, $G$ is either a single vertex, a single edge or a simple cycle. 
 In all these cases, the lower bound for the number of guards on $V(G)$, specified by substructure property is equal to the vertex cover number of the respective graphs. Hence,
 the theorem holds in the base case. 
 
 Now, let us assume that the theorem holds for any cactus with at most $k$ cut-vertices. Let $G$ be a cactus with $k+1$ cut vertices, for $k\ge 0$ and let $X$ be the set of cut vertices of $G$.
 Let $v$ be a non-cut vertex of $G$ and $G'$ be an arbitrary $v$-extension of $G$. We need to show that in any eternal vertex cover configuration of $G'$, the number of guards on $V(G)$ is as 
 specified by the substructure property. Since $v$ is a non-cut vertex of the cactus $G$, either it is a degree-one vertex of $G$ or it is a degree-$2$ vertex of $G$ that is in some block $B$ of $G$, where
 $B$ is a cycle. We want to compute a lower bound on the number of guards on $V(G)$ in an arbitrary eternal vertex cover configuration $\mathcal{C'}$ of $G'$.
 
 First, consider the case where $v$ is a degree-one vertex of $G$. Let $w$ be the neighbor of $v$ in $G$ and let $H= G \setminus v$. 
 Since $G$ has at least one cut-vertex, $w$ must be a cut vertex in $G$. 
 \begin{itemize}
  \item When $w$ is a cut-vertex of $H$: Consider any $w$-component $H_i$ of $H$. $H_i$ is a cactus and the number of its cut vertices is less than that of $G$. 
 Hence, $H_i$ satisfies the substructure property. Further, $G'$ is a $w$-extension of $H_i$.    
 In $\mathcal{C'}$, the number of guards on $V(H_i)$ is at least 
 $\evc_w(H_i)-1$ if $H_i \in T_1(H,w)$ and it is $\evc_w(H_i)$ if $H_i \in T_2(H, w)$.  The total number of guards on $V(H)$ is at least 
 $1+\sum_{H_i \in T_1(H, w)}{(\evc_w(H_i)-2)}+ \sum_{H_i \in T_2(H, w)}{(\evc_w(H_i)-1)}$. If $v$ is occupied in $\mathcal{C'}$, then the total number of guards on $V(G)$ is at least
 the same as that of $\evc(G)$ as given by Lemma~\ref{lem:cut-vertex}. If $v$ is not occupied in $\mathcal{C'}$, then to defend an attack on the edge $wv$, a guard from $V(H)$ must move to $v$. 
 Hence, in $\mathcal{C'}$, the number of guards in $V(H)$ should have been one more than the minimum mentioned earlier. Hence, in this case also, the number of guards on $V(G)$ in $\mathcal{C'}$ would have been
 at least $\evc(G)$.

 By Lemma~\ref{lem:cut-vertex}, $\evc(G_v^+)=\evc(G)+1$. 
 By Lemma~\ref{lem:pendent_vertex}, the type of $G$ with respect to $v$ is the same as the type of $H$ with respect to $w$. Hence, if $H$ is Type~1 with respect to $w$, we have  
 $\evc(G_v^+)=\evc_v(G)=\evc(G)+1$. We have seen that the number of guards on $V(G)$ is at least $\evc(G) = \evc_v(G) -1$. 
 If $H$ is Type~2 with respect to $w$, then we have $\evc(G_v^+) > \evc_v(G)$. Since $\evc(G_v^+)=\evc(G)+1$, in this case we must have $\evc_v(G)=\evc(G)$. 
 We have seen that the number of guards on $V(G)$ is at least $\evc(G)=\evc_v(G)$.  Hence, the substructure property holds in both cases.
 \item When $w$ is not a cut vertex of $H$: In this case, $G'$ is a $w$-extension of $H$. 
  By substructure property of $H$, it follows that in configuration $\mathcal{C'}$, the number of guards on $V(G)$ must be at least $\evc_w(H)$ if $H$ is Type~1 with respect to $w$ and at least $\evc_w(H)+1$ if 
  $H$ is Type~2 with respect to $w$.
  By Lemma~\ref{lem:cut-vertex}, when $H$ is Type~1 with respect to $w$, $\evc(G)=\evc_w(H)$ and when $H$ is Type~2 with respect to $w$, $\evc(G)=\evc_w(H)+1$. 
  By Lemma~\ref{lem:pendent_vertex}, the type of $G$ with respect to $v$ is the same as the type of $H$ with respect to $w$.
  Hence, if $H$ is Type~1 with respect to $w$, we have $\evc(G_v^+)=\evc_v(G)=\evc(G)+1$. The number of guards on $V(G)$ is at least $\evc_w(H)=\evc_v(G)-1$.
  If $H$ is Type~2 with respect to $w$, we have $\evc(G_v^+) > \evc_v(G)$. Since $\evc(G_v^+)=\evc(G)+1$, in this case we must have $\evc_v(G)=\evc(G)$. 
  The number of guards on $V(G)$ is at least $\evc_w(H)+1=\evc_v(G)$. Hence, the substructure property holds in both cases.
 \end{itemize}
 Now, consider the case when $v$ is a degree-two vertex of $G$ that is in some block $B$ of $G$, where $B$ is a cycle. Suppose $|V(B)|=n_b$. Let $X$ be the set of cut vertices of $G$ and $k_b=|X \cap V(B)|$. As noted earlier, we have to compute a lower bound on the number of guards on $V(G)$ in an arbitrary eternal vertex cover configuration $\mathcal{C'}$ of $G'$.
 Let $p$ and $q$ respectively be the clockwise and anticlockwise nearest vertices to $v$ in 
 $X \cap V(B)$. Let $P$ be the path in $B$ between $p$ and $q$ that does not contain $v$.
 Note that, every $B$-component of $G$ satisfies substructure property by our induction hypothesis and $G'$ is an extension for each of them. Hence, we have a lower bound on the number of guards on $V(G_i)$, for each $G_i \in \mathcal{C}_B(G)$. Similarly, the condition stated in Lemma~\ref{lemBigPath} needs to be satisfied for each eventful subpath $P'$ of $P$.
 \begin{itemize}
  \item If $T_1(G, B) = \emptyset$: By similar arguments as in the proof of Lemma~\ref{lem:evc_block}, we can see that the number of guards on $V(G)$ in $\mathcal{C'}$ must be at least $\evc(G)$. By Lemma~\ref{lem:type_degree_two}, when $n_b-k_b$ is even, $G$ is Type~1 with respect to $v$ and $\evc_v(G)=\evc(G)+1$. Since the number of guards on $V(G)$ is at least $\evc(G)=\evc_v(G)-1$, we are done. Similarly, when $n_b-k_b$ is odd, $G$ is Type~2 with respect to $v$ and $\evc_v(G)=\evc(G)$ and we are done. 
  \item If $T_1(G, B) \ne \emptyset$ and $n_b-k_b$ is even: By similar arguments as in the proof of Lemma~\ref{lem:evc_block}, we can see that the number of guards on $V(G)$ in $\mathcal{C'}$ must be at least $\evc(G)-1$. By Lemma~\ref{lem:type_degree_two},  $G$ is Type~1 with respect to $v$ and $\evc(G)=\evc_v(G)$. Since the number of guards on $V(G)$ is at least $\evc(G)-1=\evc_v(G)-1$, we are done. 
  \item $T_1(G, B) \ne \emptyset$ and $n_b-k_b$ is odd: By similar arguments as in the proof of Lemma~\ref{lem:evc_block}, we can see that the number of guards on $V(G)$ in $\mathcal{C'}$ must be at least $\evc(G)$. By Lemma~\ref{lem:type_degree_two},  $G$ is Type~1 with respect to $v$ and $\evc_v(G)=\evc(G)+1$.  Since the number of guards on $V(G)$ is at least $\evc(G)=\evc_v(G)-1$, we are done. 
 \end{itemize}
Thus, in all cases, the lower bound on the number of guards on $V(G)$ in an arbitrary eternal vertex cover configuration $\mathcal{C'}$ of $G'$ satisfies the condition stated in substructure property. Hence, $G$ satisfies substructure property.

Thus, by induction, it follows that every cactus satisfies substructure property.
\end{proof}
Now, we have all ingredients for designing a recursive algorithm for the computation of eternal vertex cover number of a cactus, using Lemma~\ref{lem:cut-vertex}. Our algorithm will take a cactus $G$ and a vertex $v$ of $G$ and output $\evc(G)$, $\evc_v(G)$ and the type of $G$ with respect to $v$. If $G$ is a cycle or an edge or a vertex, the answer is trivial and can be computed in linear time. 

In other cases, $G$ has at least one cut vertex. If $v$ is a cut vertex, then we call the algorithm recursively on each $v$-component of $G$ along with vertex $v$. Then, we can use Lemma~\ref{lem:cut-vertex} to compute $\evc(G)$ and $\evc_v(G)$ in constant time from the result of the recursive call. Using the same information from recursive calls, the type of $G$ with respect to $v$ can also be computed using Observation~\ref{obs:type_cut_vertex}. If $v$ is a pendent vertex and $w$ is its neighbor in $G$, then we recursively call the algorithm on $(G\setminus v, w)$. By Lemma~\ref{lem:pendent_vertex}, the type of $G$ with respect to $v$ is the same as the type of $G\setminus v$ with respect to $w$. Moreover, from the proof of Lemma~\ref{lem:pendent_vertex}, we have $\evc_v(G)=\evc_w(G\setminus v)+1$. Further, $\evc(G)=\evc_w(G \setminus v)$, when $G$ is Type~1 with respect to $v$ and 
$\evc(G)=\evc_w(G \setminus v)+1$, when $G$ is Type~2 with respect to $v$. Thus, from the results of the recursive call on $(G\setminus v, w)$, the output can be computed in constant time. In the remaining case, $v$ is a vertex that belongs to a cycle $B$ in $G$. In this case, we recursively call the algorithm for each $B$-component of $G$, along with the respective cut vertices its shares with $B$. Using this information, we can compute $\evc(G)$ using Lemma~\ref{lem:evc_block} in time proportional to the number of $B$-components. We can also compute $\evc_v(G)$ and the type of $G$ with respect to $v$, using  Lemma~\ref{lem:type_degree_two} in time proportional to the number of $B$-components. 

Thus, the algorithm works in all cases and runs in time linear in the size of $G$. Hence, we have the following result.
\begin{theorem}\label{thm:cactus-algorithm}
 Eternal vertex cover number of a cactus $G$ can be computed in time linear in the size of $G$.
\end{theorem}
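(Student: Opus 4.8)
The plan is to verify two things about the recursive procedure described just above the theorem statement: that it returns correct values, and that it can be implemented to run in time linear in $|V(G)|$. Correctness I would establish by structural induction on $|V(G)|$, with the inductive step being a direct assembly of the lemmas already proved; the genuinely delicate point is the running-time bound. The procedure takes a pair $(G,v)$ with $G$ a cactus and outputs the triple $(\evc(G),\evc_v(G),\tau)$, where $\tau$ is the type of $G$ with respect to $v$, and I would prove that this triple is always correct. In the base case $G$ is a single vertex, a single edge, or a cycle, and all three quantities are given by the explicit formulas recalled in the introduction together with Lemma~\ref{lem:type_degree_two} specialized to a cycle with no cut vertices; these are computable in $O(|V(G)|)$ time. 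For the inductive step, the first thing to record is that, by Theorem~\ref{thm:cactus-substructure}, every cactus satisfies the substructure property; since $v$-components, $B$-components, and (for pendent $v$) $G\setminus v$ are connected induced subgraphs of $G$, hence themselves cacti, they too satisfy the substructure property, which is exactly the hypothesis needed to invoke Lemmas~\ref{lem:cut-vertex},~\ref{lem:pendent_vertex},~\ref{lem:evc_block},~\ref{lem:type_degree_two} and Observation~\ref{obs:type_cut_vertex} at every recursive call.

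Next I would go through the three inductive cases exactly as in the description preceding the theorem. If $v$ is a cut vertex, recurse on $(G_i,v)$ for each $G_i\in\mathcal{C}_v(G)$ (in which $v$ is a non-cut vertex, since $G_i\setminus v$ is connected); each $G_i$ is a strictly smaller cactus, so the inductive hypothesis applies, Lemma~\ref{lem:cut-vertex} computes $\evc(G)=\evc_v(G)$ from the returned values, and Observation~\ref{obs:type_cut_vertex} computes the type of $G$ with respect to $v$, all in $O(|\mathcal{C}_v(G)|)$ time. If $v$ is a pendent vertex with neighbor $w$, recurse on the smaller cactus $(G\setminus v,w)$; Lemma~\ref{lem:pendent_vertex} gives that the type of $G$ with respect to $v$ equals the type of $G\setminus v$ with respect to $w$, and its proof gives $\evc_v(G)=\evc_w(G\setminus v)+1$ and $\evc(G)=\evc_w(G\setminus v)$ or $\evc_w(G\setminus v)+1$ according as $G$ is Type~1 or Type~2 with respect to $v$, all in $O(1)$ time. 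Finally, if $v$ is a degree-$2$ vertex of a cycle block $B$, recurse on $(G_i,x_i)$ for each $G_i\in\mathcal{C}_B(G)$, where $x_i$ is the vertex $G_i$ shares with $B$ (again a non-cut vertex of $G_i$); from the returned data compute $T_1(G,B)$, $T_2(G,B)$ and $\chi(G,B)$, then apply Lemma~\ref{lem:evc_block} to get $\evc(G)$ and Lemma~\ref{lem:type_degree_two} to get $\evc_v(G)$ and the type of $G$ with respect to $v$, in $O(|V(B)|+|\mathcal{C}_B(G)|)$ time. In every case the output triple meets the specification, which closes the induction.

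For the running time, I would have the algorithm first perform a single linear-time depth-first pass computing the cut vertices, the blocks, and the block--cut tree of $G$, and then argue that the recursion merely walks this tree: each block of $G$ is the current block of exactly one recursive call, and each pendent vertex is removed in exactly one call. The structural facts that make the amortization work are that in the cut-vertex case the $v$-components are pairwise edge-disjoint and partition $E(G)$, in the cycle-block case the edges of $B$ together with the edge sets of the $B$-components partition $E(G)$ and the $B$-components are pairwise edge-disjoint, and in the pendent case only the single edge $vw$ is shed. Consequently the local (non-recursive) work, which is $O(|V(B)|+|\mathcal{C}_B(G)|)$ at a cycle block, $O(|\mathcal{C}_v(G)|)$ at a cut vertex, and $O(1)$ at a pendent vertex, sums to $O(|V(G)|+|E(G)|)$, because the numbers of vertex--block and cut-vertex--block incidences in a cactus are both $O(|V(G)|)$. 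Since a cactus on $n$ vertices has at most $\lfloor 3(n-1)/2\rfloor$ edges, the total time is $O(|V(G)|)$.

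The step I expect to be the main obstacle is precisely this running-time amortization, not the correctness. A naive implementation that materializes each $v$-component or $B$-component as a fresh graph object would re-examine every edge once per ancestor in the recursion tree and could take time up to quadratic in $|V(G)|$; the proof must instead show that the relevant components, the partition of a cycle block into eventful and maximal uneventful subpaths, and the quantities $\chi(G,B),T_1(G,B),T_2(G,B)$ can each be extracted in time proportional to the size of the current block plus the number of its components, using only the precomputed block--cut structure and the values returned by the recursive calls, so that no edge is touched more than a constant number of times. Granting that bookkeeping, everything else is a routine assembly of Lemmas~\ref{lem:cut-vertex},~\ref{lem:pendent_vertex},~\ref{lem:evc_block} and~\ref{lem:type_degree_two} with Observation~\ref{obs:type_cut_vertex}.
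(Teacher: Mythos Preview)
Your proposal is correct and follows essentially the same approach as the paper: the same recursive procedure on $(G,v)$, the same three-way case split (cut vertex, pendent vertex, degree-$2$ vertex on a cycle block), and the same lemmas invoked in each case. The only difference is that you are more careful than the paper about the linear-time claim---the paper asserts linearity from the per-call local work without spelling out the amortization over the block--cut tree or the need to avoid materializing subgraphs---so your added discussion of precomputing the block structure and charging each edge a constant number of times is a genuine strengthening of the exposition rather than a departure from it.
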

From the upper bound arguments in the proofs discussed in Section~\ref{sec:basics} and Section~\ref{sec:type-computation}, we can see that with $\evc(G)$ guards determining configurations of guards to keep defending attacks on $G$ is straightforward. 
\section{Extension to other graph classes}\label{sec:extension}
It may be noticed that most of the intermediate results stated in this paper are generic, though we have stated Lemma~\ref{lem:evc_block} and Lemma~\ref{lem:type_degree_two} in a way suitable for handling cactus graphs. 
In this section, we show how to extend the method used for cactus graphs to a graph class which is somewhat more general. We consider connected graphs in which each block is a cycle, an edge or a biconnected chordal graph.

 To generalize the proof of Theorem~\ref{thm:cactus-substructure} for this class, the base case of the proof needs to be modified to handle biconnected chordal graphs as well. The following observation addresses this requirement. 
 \begin{observation}\label{obs:biconnected-chordal-substructure}
  Every biconnected chordal graph satisfies substructure property.
 \end{observation}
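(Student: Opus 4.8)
The goal is to show: for any biconnected chordal graph $G$ and any non-cut vertex $x$ of $G$ (every vertex of a biconnected graph is a non-cut vertex), and for every $x$-extension $G'$ of $G$, every eternal vertex cover $C'$ of $G'$ has at least $\evc_x(G)-1$ guards on $V(G)$ if $\evc(G_x^+) \le \evc_x(G)$, and at least $\evc_x(G)$ guards on $V(G)$ otherwise. The plan is to reduce this to facts already established for biconnected chordal graphs in \cite{evc_lb_Arxiv}. The key starting point should be the characterization of $\evc(G)$ for biconnected chordal graphs, namely that $\evc(G) \in \{\mvc(G), \mvc(G)+1\}$ and that this is governed by the lower bound $\mvc_X(G)$; since $G$ is biconnected it has no cut vertices of its own, so $\mvc_X(G) = \mvc(G)$ and one works directly with a minimum vertex cover. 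The first step is to observe that for a biconnected chordal graph, $\evc_x(G) \le \mvc(G)+1$ for every vertex $x$: given a minimum vertex cover $S$, either $x \in S$ and we use a known defense strategy that keeps $S$ occupied, or $x \notin S$ and we add one extra guard pinned at $x$, giving a strategy with $\mvc(G)+1$ guards keeping $x$ always occupied. Combined with $\evc_x(G) \ge \evc(G) \ge \mvc(G)$, this pins down $\evc_x(G) \in \{\mvc(G), \mvc(G)+1\}$.

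Next I would analyze an arbitrary eternal vertex cover configuration $C'$ of an $x$-extension $G'$. Let $m$ be the number of guards of $C'$ lying on $V(G)$. Because every vertex of $G$ other than $x$ has all its $G$-edges inside $G$, and because $x$ is the only vertex of $G$ incident to edges of $G'$ outside $G$, the restriction of $C'$ to $V(G)$ must be a vertex cover of $G$; hence $m \ge \mvc(G)$ always. The crux is to show $m \ge \evc_x(G)-1$ in the Type~1 case and $m \ge \evc_x(G)$ in the Type~2 case. In the Type~2 case $\evc_x(G) = \mvc(G)+1$ is impossible to force with only $\mvc(G)$ guards restricted to $V(G)$ while also keeping $x$ occupied whenever a $G'$-edge at $x$ is attacked — here I would mimic the argument used for cycles and for the cut-vertex lemma: there is a sequence of attacks confined to edges of $G$ that drives the configuration into a state with a minimum vertex cover of $G$ on $V(G)$; if that minimum vertex cover misses $x$, then a follow-up attack on the external edge at $x$ forces a guard off $V(G)$, dropping the count below $\mvc(G)$ unless $m \ge \mvc(G)+1 = \evc_x(G)$. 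In the Type~1 case $\evc_x(G) = \mvc(G)$ or $\mvc(G)+1$; if $\evc_x(G) = \mvc(G)$ then $m \ge \mvc(G) = \evc_x(G) \ge \evc_x(G)-1$ trivially, and if $\evc_x(G) = \mvc(G)+1$ we only need $m \ge \mvc(G)$, which again holds because the restriction is a vertex cover of $G$.

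The main obstacle I anticipate is the forcing argument in the Type~2 case: establishing rigorously that an adversary, attacking only inside $G$, can always push the configuration on $V(G)$ down to exactly a minimum vertex cover, and can moreover force $x$ to be unoccupied in some such configuration whenever $x$ does not belong to \emph{every} minimum eternal vertex cover configuration. This is exactly the phenomenon that distinguishes Type~1 from Type~2 with respect to $x$, so the argument must invoke the definition of $\evc_x(G) > \evc(G_x^+)$ carefully. I would handle this by contraposition: if in $G'$ we could perpetually maintain $m = \mvc(G)$ guards on $V(G)$ with the restriction always a valid eternal vertex cover response inside $G$, then adding a pendant guard at $x$ (or reusing the external guard of $G'$ as a witness that $x$ can be covered from outside) would yield a defense of $G_x^+$ with $\mvc(G)+1 = \evc_x(G)$ guards in which $x$ need not always be occupied by a guard of $V(G)$, contradicting $\evc(G_x^+) > \evc_x(G)$ — wait, this needs $\evc(G_x^+) \le \evc_x(G)$ to derive the contradiction, which is precisely the Type~1 hypothesis, so in the Type~2 subcase $m$ can indeed be forced up to $\evc_x(G)$. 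Assembling these cases establishes both bullets of Definition~\ref{def:substructure}, completing the proof.
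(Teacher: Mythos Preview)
Your argument rests on the claim that $\evc_x(G)\in\{\mvc(G),\mvc(G)+1\}$ for every biconnected chordal graph $G$ and every vertex $x$. This is false. Take the ``book'' with three pages: vertices $a,b,c,d,e$ and edges $ab,ac,bc,ad,bd,ae,be$. This graph is biconnected and chordal, $\mvc(G)=2$ (namely $\{a,b\}$), but for $x=c$ one has $\mvc_c(G)=3$ and $\evc_c(G)=4=\mvc(G)+2$. Your upper-bound sketch (``pin an extra guard at $x$ on top of a defense using a minimum vertex cover'') silently assumes $\evc(G)=\mvc(G)$, which fails here: two guards cannot defend this graph at all. Consequently your Type~1/Type~2 case split is incomplete, and in the example above (which is Type~1 with respect to $c$, since $\evc(G_c^+)=4=\evc_c(G)$) the substructure property demands at least $\evc_c(G)-1=3$ guards on $V(G)$, whereas your only established bound is $m\ge\mvc(G)=2$.

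The fix, and what the paper actually does, is to work with $\mvc_x(G)$ rather than $\mvc(G)$ throughout. The cited result from \cite{BCFPRW:CALDAM} gives $\evc_x(G)\in\{\mvc_x(G),\mvc_x(G)+1\}$ and $\evc(G_x^+)=\mvc_x(G)+1$ for biconnected chordal $G$; thus Type~2 forces $\evc_x(G)=\mvc_x(G)$ and Type~1 forces $\evc_x(G)=\mvc_x(G)+1$. The matching lower bound is the single static observation that every eternal vertex cover configuration $C'$ of an $x$-extension $G'$ must have at least $\mvc_x(G)$ guards on $V(G)$: if fewer, then $x$ is unoccupied (the guarded vertices in $V(G)$ form a vertex cover of $G$ not containing $x$), and an attack on any edge of $G$ incident to $x$ cannot be defended while keeping all edges of $G$ covered. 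This one-step argument replaces your multi-round ``forcing'' attempt entirely and handles both types at once.
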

 \begin{proof}
  Let $G$ be a biconnected chordal graph and $v \in V(G)$.  
  Let $G'$ be a $v$-extension of $G$ and $C$ be an eternal vertex cover configuration of $G'$.
  If the number of guards on $V(G)$ in $C$ is less than $\mvc_v(G)$, then $v$ is not occupied in $C$. In this configuration, an attack on an edge of $G$ adjacent to $v$ cannot be defended, because when a guard on a vertex of $G$ moves to $v$, some edge of $G$ will be without guards.
  Therefore, in any arbitrary eternal vertex cover configuration $C$ of $G'$, the number of guards on $V(G)$ must be at least $\mvc_v(G)$. 
   
  By a result in~\cite{BCFPRW:CALDAM}, $\evc_v(G) \in \{\mvc_v(G), \mvc_v(G)+1\}$ and $\evc(G_v^+)=\mvc(G_v^+)+1=\mvc_v(G)+1$. If $\evc_v(G)=\mvc_v(G)$ and $\evc(G_v^+)=\mvc_v(G)+1$, then $G$ is Type~2 with respect to $v$ and we need at least $\evc_v(G)=\mvc_v(G)$ guards on $V(G)$. If $\evc_v(G)=\mvc_v(G)+1=\evc(G_v^+)$, then $G$ is Type~1 with respect to $v$ and we need at least $\evc_v(G)-1=\mvc_v(G)$ guards on $V(G)$. Thus, in both cases, the requirements of substructure property are satisfied by $G$.  
 \end{proof}
 The following lemma is a suitable modification of Lemma~\ref{lem:evc_block} and Lemma~\ref{lem:type_degree_two} to handle the new class. 
\begin{lemma}\label{lem:evc_chordal}
Let $B$ be a biconnected chordal graph forming a block of a connected graph $G$ and let $X$ be the set of cut vertices of $G$. 
 Suppose each $B$-component $G_i$ of $G$ that belongs to $\mathcal{C}_B(G)$ satisfies the substructure property.
 If $T_1(G, B)=\emptyset$, 
 then $\evc(G)=\evc_{ X\cap V(B)}(B) +\chi(G, B) -|X \cap V(B)| $ and
 $\evc(G)=\mvc_{ X\cap V(B)}(B) + 1 +\chi(G, B) -|X \cap V(B)|$ otherwise.
 Further, if $\chi(G, B)$ is known, then for any $v \in V(B)$, the type of $G$ with respect to $v$ can be computed in time quadratic in the size of $B$. 
 \end{lemma}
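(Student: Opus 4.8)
The plan is to prove the formula for $\evc(G)$ exactly as Lemma~\ref{lem:evc_block} is proved for cycles — a base lower bound, then a case split that sharpens it, then matching strategies — but with the cycle-specific counting $\lceil|V(B)\setminus X|/2\rceil$ replaced by $\mvc_{X\cap V(B)}(B)$ or $\evc_{X\cap V(B)}(B)$; and then to obtain the type statement as in Lemma~\ref{lem:type_degree_two}, by writing $\evc(G_v^+)$ and $\evc_v(G)$ in closed form from that formula and comparing them.

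Write $k=|X\cap V(B)|$ and, for each $B$-component $G_i\in\mathcal{C}_B(G)$, let $x_i\in V(B)$ be the vertex it shares with $B$. Since $B$ is a block, the only vertices of $B$ lying in any $B$-component are the cut vertices in $X\cap V(B)$, so $V(G)=(V(B)\setminus X)\sqcup\bigcup_{G_i\in\mathcal{C}_B(G)}V(G_i)$, the union meeting $V(B)$ exactly in $X\cap V(B)$. In any eternal vertex cover configuration of $G$, every cut vertex is occupied, the guards lying on $V(B)$ form a vertex cover of $B$ containing $X\cap V(B)$, and by the remark following the definition of $\chi$ the guards lying on $\bigcup_{G_i}V(G_i)$ number at least $\chi(G,B)$; adding the contribution of $V(B)\setminus X$ gives the base bound $\evc(G)\ge\mvc_{X\cap V(B)}(B)-k+\chi(G,B)$.

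If $T_1(G,B)\neq\emptyset$, I first raise this by one. If a minimum eternal class attained the base bound with equality, then in every configuration the guards on $V(B)$ would be a minimum $X\cap V(B)$-vertex cover of $B$ and each $B$-component $G_i$ would carry exactly $\evc_{x_i}(G_i)-1$ guards if Type~1 and $\evc_{x_i}(G_i)$ if Type~2, every $x_i$ occupied; hammering edges inside a Type~1 component $G_j$ would then eventually force $\evc_{x_j}(G_j)$ guards onto $V(G_j)$, the defining property of $\evc_{x_j}(G_j)$, a contradiction. A matching strategy keeps a minimum $X\cap V(B)$-vertex cover on $V(B)$, the full $\evc_{x_i}(G_i)$ guards on each Type~2 component, and $\evc_{x_i}(G_i)-1$ on each Type~1 component save one distinguished Type~1 component that gets its full quota; the extra (``roaming'') guard parked in that distinguished component plays the role the alternating pattern of a cycle plays in Lemma~\ref{lem:evc_block}, and attacks are answered by shuffling guards through $B$ and swapping which Type~1 component holds the quota, as in the relevant parts of the proofs of Lemma~\ref{lem:evc_block} and Lemma~\ref{lem:cut-vertex}. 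If $T_1(G,B)=\emptyset$, then no guard can effectively be moved between $V(B)$ and a component: a Type~2 component always needs its full $\evc_{x_i}(G_i)$ guards by the substructure property, and no non-cut vertex of $B$ has a neighbour outside $B$. One deduces (by the same kind of argument the path lemmas supply for cycles) that any eternal class of $G$, restricted to $V(B)$, must in some configuration carry at least $\evc_{X\cap V(B)}(B)$ guards on $V(B)$, whence $\evc(G)\ge\evc_{X\cap V(B)}(B)-k+\chi(G,B)$; conversely, running an optimal $\evc_{X\cap V(B)}(B)$-guard defence of $B$ (with $X\cap V(B)$ always occupied) alongside an $\evc_{x_i}(G_i)$-guard defence of each (Type~2) component — mutually compatible because neither pushes a guard off a cut vertex into the other's part — attains it.

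For the algorithmic claim, $\chi(G,B)$, $k$, and whether $T_1(G,B)=\emptyset$ are available, so by the formula it remains to compute $\mvc_S(B)$ and, when $T_1(G,B)=\emptyset$, $\evc_S(B)$, for the sets $S=X\cap V(B)$ and $S=(X\cap V(B))\cup\{v\}$: the former is a constrained minimum vertex cover of a chordal graph, the latter comes from the chordal eternal vertex cover algorithm of~\cite{evc_lb_Arxiv}, both in time quadratic in $|V(B)|$. To decide the type of $G$ at $v\in V(B)$ I compute $\evc(G_v^+)$ and $\evc_v(G)$: adding a pendant at $v$ makes $v$ a cut vertex and creates a Type~1 $B$-component of $\evc$-value $2$, so applying the formula to $G_v^+$ (for which $T_1\neq\emptyset$ and $\chi$ grows by one, while $|X\cap V(B)|$ grows by one as well) yields $\evc(G_v^+)=\mvc_{(X\cap V(B))\cup\{v\}}(B)+1+\chi(G,B)-k$, while $\evc_v(G)$ satisfies the same formula as $\evc(G)$ with the forced set enlarged to $(X\cap V(B))\cup\{v\}$; since $\evc_S(B)\in\{\mvc_S(B),\mvc_S(B)+1\}$ for biconnected chordal $B$ by~\cite{BCFPRW:CALDAM}, comparing the two reduces to one numerical check, and $G$ is Type~1 at $v$ precisely when $T_1(G,B)\neq\emptyset$, or when $T_1(G,B)=\emptyset$ and $\evc_{(X\cap V(B))\cup\{v\}}(B)=\mvc_{(X\cap V(B))\cup\{v\}}(B)+1$. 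I expect the main obstacle to be the case $T_1(G,B)\neq\emptyset$: beyond the extra ``$+1$'', one must verify that a single roaming guard together with one distinguished Type~1 component really does let every attack on $B$ and on the components be defended while maintaining the stated invariants — the delicate bookkeeping that makes the proof of Lemma~\ref{lem:evc_block} long, now to be carried out for an arbitrary biconnected chordal $B$ rather than a cycle; secondary points are justifying that $\evc_v(G)$ obeys the claimed formula (that forcing the single vertex $v$ behaves like enlarging $X\cap V(B)$ by $v$) and that the bound $\evc_S(B)\le\mvc_S(B)+1$, together with its efficient testability, extends from single vertices to sets $S$.
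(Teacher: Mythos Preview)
Your outline matches the paper's proof closely: the same base lower bound $\mvc_{X\cap V(B)}(B)-k+\chi(G,B)$, the same case split on $T_1(G,B)$, the same upper-bound strategies, and the same route to the type computation via closed forms for $\evc_v(G)$ and $\evc(G_v^+)$ with the forced set enlarged by $v$. The one substantive divergence is your lower bound in the case $T_1(G,B)=\emptyset$. Your appeal to ``the same kind of argument the path lemmas supply for cycles'' is a red herring: Lemma~\ref{lemBigPath} is a combinatorial count along eventful subpaths and has no analogue in a general biconnected chordal block. The paper instead uses the characterization from~\cite{BCFPRW:CALDAM} that $\evc_{S}(B)=\mvc_{S}(B)+1$ precisely when some $v\in V(B)\setminus S$ has $\mvc_{S\cup\{v\}}(B)>\mvc_S(B)$; since any such $v$ must be occupied in some configuration, that configuration already needs $\mvc_{X\cap V(B)}(B)+1$ guards on $V(B)$, giving $\evc(G)\ge\evc_{X\cap V(B)}(B)+\chi(G,B)-k$ directly. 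Your ``restriction to $V(B)$ is an eternal class of $B$'' idea can be made to work, but only after observing that if $\evc(G)$ equalled the base bound then every component would sit at exactly its quota in every configuration, forcing the guard count on $V(B)$ to be constant; you should make that explicit rather than gesturing at the path lemmas.

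Two smaller points. First, the paper does not invoke the chordal algorithm of~\cite{evc_lb_Arxiv} to compute $\evc_S(B)$; it computes $\mvc_{(X\cap V(B))\cup\{v\}}(B)$ for every $v\in V(B)$ (linear time each, hence quadratic overall) and reads off $\evc_S(B)$ from the~\cite{BCFPRW:CALDAM} characterization. Second, your anticipated ``main obstacle'' --- the roaming-guard bookkeeping in the $T_1(G,B)\neq\emptyset$ upper bound --- is dispatched in the paper in one line (``not difficult to also show''), using only that $\evc_{X\cap V(B)}(B)\le\mvc_{X\cap V(B)}(B)+1$; the actual care is needed in the step you treated lightly.
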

 \begin{proof}
  Since $B$ a biconnected chordal graph, if for every $v \in V(B) \setminus X$,  $\mvc_{ (X\cap V(B))\cup \{v\}}(B) = \mvc_{ X\cap V(B)}(B)$, then $\evc_{ X\cap V(B)}(B)= \mvc_{ X\cap V(B)}(B)$ and $\evc_{ X\cap V(B)}(B)= \mvc_{ X\cap V(B)}(B)+1$ otherwise~\cite{BCFPRW:CALDAM}. 
  Consider any eternal vertex cover class $\mathcal{C}$ of $G$. By substructure property of $B$-components of $G$, it follows that in any configuration of $\mathcal{C}$, the number of guards on $\bigcup_{G_i \in \mathcal{C}_B}{V(G_i)}$ is at least $\chi(G, B)$. To cover edges of the induced subgraph of $G$ on $V(B) \setminus X$, the number of guards required is at least $\mvc(B \setminus X)= \mvc_{ X\cap V(B)}(B)-|X \cap V(B)|$. Hence, the total number of guards on $V(G)$ must be at least $\mvc_{ X\cap V(B)}(B)-|X \cap V(B)|+\chi(G, B)$.  Further, if there is a vertex $v \in V(B) \setminus X$ for which $\mvc_{ (X\cap V(B))\cup \{v\}}(B) \ne \mvc_{ X\cap V(B)}(B)$, then in any configuration of $\mathcal{C}$ in which $v$ is occupied, the total number of guards on $V(G)$ must be at least $\mvc_{ X\cap V(B)}(B)+1-|X \cap V(B)|+\chi(G, B)$. Hence, in all cases, $\evc(G) \ge \evc_{ X\cap V(B)}(B) +\chi(G, B) -|X \cap V(B)|$.
    \begin{itemize}
   \item[a.]When $T_1(G, B)=\emptyset$: In this case, it is easy to show that $\evc(G) \le \evc_{ X\cap V(B)}(B) +\chi(G, B) -|X \cap V(B)|$.
   \item[b.]When $T_1(G, B) \ne \emptyset$: In this case, by repeated attacks on edges of $V(G_i)$ for some $G_i \in T_1(G, B)$, eventually a configuration which requires $\chi(G, B)+1$ guards on $\bigcup_{G_i \in \mathcal{C}_B}{V(G_i)}$ can be forced. Hence, $\evc(G) \ge \mvc_{ X\cap V(B)}(B)+1+\chi(G, B) -|X \cap V(B)|$. Since $\evc_{ X\cap V(B)}(B) \le \mvc_{ X\cap V(B)}(B)+1$, it is not difficult to also show that $\evc(G) \le \mvc_{ X\cap V(B)}(B)+1+\chi(G, B) -|X \cap V(B)|$. 
  \end{itemize}
  In both cases, the value of $\evc(G)$ is as stated in the lemma.  
  For deciding the type of $G$ with respect to a vertex $v \in V(B)$, we need to compare $\evc_v(G)$ and $\evc(G_v^+)$. For computing $\evc(G_v^+)$, we can use the formula given by the first part of the lemma for the graph $\evc(G_v^+)$. Using similar arguments as in the proof of the first part of the lemma, we get the following: if $T_1(G, B)=\emptyset$, then $\evc_v(G)=\evc_{(X\cap V(B))\cup \{v\}}(B) +\chi(G, B) -|X \cap V(B)| $ and
 $\evc_v(G)=\mvc_{ (X\cap V(B))\cup \{v\}}(B) + 1 +\chi(G, B) -|X \cap V(B)|$ otherwise. By computing $\mvc_{ (X\cap V(B))}(B)$ and $\mvc_{ (X\cap V(B))\cup \{v\}}(B)$ for each $v \in V(B)$, $\evc_v(G)$ and $\evc(G_v^+)$ can be computed. Since minimum vertex cover of a chordal graph can be computed in linear time, the total time required for computing $\evc_v(G)$ and $\evc(G_v^+)$ this way is possible in time quadratic in the size of $B$.  
 From the values of $\evc_v(G)$ and $\evc(G_v^+)$, the type of $G$ with respect to $v$ can be inferred. 
\end{proof}

 Now, similar arguments as in the proof of Theorem~\ref{thm:cactus-substructure} and Theorem~\ref{thm:cactus-algorithm} yields:
 \begin{theorem}
   Suppose $G$ is a connected graph in which each block is a cycle, an edge or a biconnected chordal graph. Then, $G$ satisfies substructure property and the eternal vertex cover number of $G$ can be computed in time quadratic in the size of $G$.
 \end{theorem}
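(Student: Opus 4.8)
The plan is to mirror the two-theorem structure already established for cactus graphs (Theorem~\ref{thm:cactus-substructure} and Theorem~\ref{thm:cactus-algorithm}), replacing the cactus-specific base case and the cactus-specific block lemmas with their chordal counterparts proved above. Concretely, I would first re-run the inductive proof of Theorem~\ref{thm:cactus-substructure} on the number of cut vertices of $G$, where now $G$ is a connected graph in which every block is an edge, a cycle, or a biconnected chordal graph. In the base case $G$ has no cut vertex, so $G$ is a single vertex, a single edge, a cycle, or a biconnected chordal graph; the first three are already handled in the cactus proof, and the biconnected chordal case is exactly Observation~\ref{obs:biconnected-chordal-substructure}. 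Hence the base case goes through.

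For the inductive step, let $v$ be a non-cut vertex of $G$ and $G'$ an arbitrary $v$-extension; I must lower-bound the number of guards on $V(G)$ in any eternal vertex cover configuration of $G'$. As in the cactus proof, $v$ either has degree one or lies in some block $B$ containing no cut vertex of $G$ on the ``$v$-side''. The degree-one case is verbatim the cactus argument: it only uses Lemma~\ref{lem:cut-vertex}, Lemma~\ref{lem:pendent_vertex}, and the substructure property of the $w$-components (which hold by induction since those components have fewer cut vertices and are again graphs of the allowed class). When $v$ lies in a block $B$, there are two sub-cases. If $B$ is a cycle, the argument is identical to the cactus case via Lemma~\ref{lem:evc_block} and Lemma~\ref{lem:type_degree_two}. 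If $B$ is a biconnected chordal graph, I invoke Lemma~\ref{lem:evc_chordal} in place of those two lemmas: it gives both the value of $\evc(G)$ in terms of $\chi(G,B)$ and $\mvc$-parameters of $B$, and a way to read off the type of $G$ with respect to $v$ by comparing $\evc_v(G)$ and $\evc(G_v^+)$. Combining the $\chi(G,B)$ lower bound on the $B$-components with the $\mvc(B\setminus X)$ lower bound on the interior of $B$, exactly as in the proof of Lemma~\ref{lem:evc_chordal}, yields the required lower bound on guards in $V(G)$, matching $\evc_v(G)-1$ when $G$ is Type~1 with respect to $v$ and $\evc_v(G)$ when it is Type~2. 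So $G$ satisfies the substructure property, and induction completes the first half.

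For the algorithmic half, I follow the recursive scheme sketched after Theorem~\ref{thm:cactus-substructure}: on input $(G,v)$ the algorithm returns $\evc(G)$, $\evc_v(G)$, and the type of $G$ with respect to $v$. The cases $v$ a cut vertex (use Lemma~\ref{lem:cut-vertex} and Observation~\ref{obs:type_cut_vertex}) and $v$ a pendent vertex (use Lemma~\ref{lem:pendent_vertex}) are unchanged. When $v$ lies in a block $B$, recurse on every $B$-component with its shared cut vertex, assemble $\chi(G,B)$, and then: if $B$ is a cycle use Lemma~\ref{lem:evc_block} and Lemma~\ref{lem:type_degree_two} (linear in $|B|$); if $B$ is a biconnected chordal block use Lemma~\ref{lem:evc_chordal}, which costs time quadratic in $|B|$ because it computes $\mvc_{X\cap V(B)}(B)$ and $\mvc_{(X\cap V(B))\cup\{v\}}(B)$ for each $v\in V(B)$ via linear-time chordal vertex-cover computations. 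Summing $|B|^2$ over all blocks is at most $(\sum_B |B|)^2 = O(|V(G)|^2)$ since the blocks are edge-disjoint, and the remaining bookkeeping is linear; hence the total running time is quadratic in the size of $G$.

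The main obstacle is not the recursion but making sure the substructure-property induction is airtight in the chordal-block sub-case: one must verify that the lower-bound counting used inside the proof of Lemma~\ref{lem:evc_chordal} still applies when the configuration lives on a $v$-extension $G'$ rather than on $G$ itself, i.e.\ that no guard can be ``imported'' across $v$ to beat the $\mvc(B\setminus X)+\chi(G,B)$ bound, and that the Type~1/Type~2 dichotomy predicted by Lemma~\ref{lem:evc_chordal} lines up with the two clauses of Definition~\ref{def:substructure} exactly as it did for cycles in the cactus proof. This is a routine adaptation of the degree-two argument in the proof of Theorem~\ref{thm:cactus-substructure}, using Lemma~\ref{lem:evc_chordal} wherever that proof used Lemma~\ref{lem:evc_block} and Lemma~\ref{lem:type_degree_two}, but it is the one place where care is genuinely required.
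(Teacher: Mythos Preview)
Your proposal is correct and follows exactly the approach the paper intends: the paper's own proof is a single sentence stating that ``similar arguments as in the proof of Theorem~\ref{thm:cactus-substructure} and Theorem~\ref{thm:cactus-algorithm}'' yield the result, and your write-up is precisely the elaboration of those arguments, with Observation~\ref{obs:biconnected-chordal-substructure} handling the new base case and Lemma~\ref{lem:evc_chordal} substituting for Lemma~\ref{lem:evc_block} and Lemma~\ref{lem:type_degree_two} when the block is biconnected chordal. One minor quibble: your running-time justification that $\sum_B |B| = O(|V(G)|)$ ``since the blocks are edge-disjoint'' conflates vertex and edge counts; the clean bound is $\sum_B |B| = O(|V(G)|+|E(G)|)$, which still gives a quadratic bound in the size of $G$.
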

 \section{A cactus for which other lower bounds are weak}\label{sec:example}
 \begin{figure}[h]
 \begin{center}
 \includegraphics[scale=0.7]{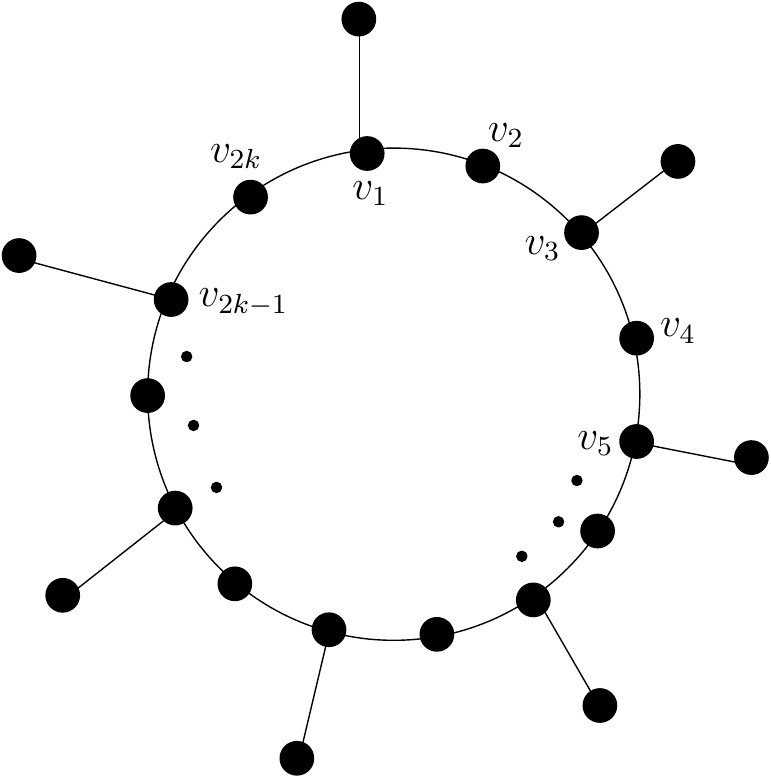}
 \end{center}
\caption{A cactus $G$ in which pendent vertices are attached to alternate vertices of an even cycle on $2k$ vertices. The cactus $G$ has a minimum vertex cover of size $k$ containing all cut vertices of $G$ and $\evc(G)=k+\lceil\frac{k+1}{2}\rceil$. }
\label{fig:cactus_example}
 \end{figure}
It is known that if $G$ is a graph with $X$ being the set of its cut vertices, and $\mvc_X(G)$ is the minimum cardinality of a vertex cover of $G$ that contains all vertices of $X$, then $\evc(G) \ge \mvc_X(G)$ \cite{evc_lb_Arxiv}. For cycles, 
this lower bound is equal to the eternal vertex cover number and for trees the difference of the eternal vertex cover number and this lower bound is at most one.   
But, it is interesting to note that the lower bound could be much smaller than the optimum for some cactus graphs. An example of this is shown in Figure~\ref{fig:cactus_example}. Using the formula given by Lemma~\ref{lem:evc_block}, we can show that the eternal vertex cover number of this graph $G$ is $k+\lceil\frac{k+1}{2}\rceil$. However, for this graph, $\mvc_X(G)$ is only $k$ making $\evc(G) > 1.5 \mvc_X(G)$.  
\section{Discussion and Open problems}
The lower bounding method based on the substructure
property of cactus graphs presented here shows that the method could 
be potentially useful in 
obtaining efficient algorithms for computing the eternal vertex cover
number of graphs that belong to classes 
for which other known lower bound techniques 
are not effective.  Though the substructure property has been applied here in context of cactus graphs, we believe
that this property or a generalization of it holds for fairly large classes of graphs 
(if not all graphs). We list below a few questions that appear interesting 
in this context.
\begin{itemize}
\item Do all graphs satisfy the substructure property? If not, can we characterize graphs that satisfy this property?
\item Is it possible to generalize substructure property and use it for the computation of eternal vertex cover number of outerplanar graphs and bounded treewidth graphs?
\item In the restricted version of the eternal vertex cover problem in which at most one guard is permitted on a vertex in any configuration, is it true in general that for any graph $G$ and vertex $x$ of $G$, $\evc(G) \le evc(G_x)^+$?
\end{itemize}
The last question looks deceptively simple (and is trivial if multiple guards are allowed on a vertex in a configuration).   For graphs that satisfy substructure property, the answer to this question is affirmative (by Remark~\ref{rmk:type_noncut} and Observation~\ref{obs:type_cut_vertex}). But, we do not know the answer in general.
\section{Acknowledgments}
We would like to thank Ilan Newman, University of Haifa, for introducing the problem and for his useful suggestions.


\begin{thebibliography}{10}

\bibitem{Klostermeyer2009}
W.F. Klostermeyer and C.M. Mynhardt.
\newblock Edge protection in graphs.
\newblock {\em Australasian Journal of Combinatorics}, 45:235 -- 250, 2009.

\bibitem{Fomin2010}
Fedor~V. Fomin, Serge Gaspers, Petr~A. Golovach, Dieter Kratsch, and Saket
  Saurabh.
\newblock Parameterized algorithm for eternal vertex cover.
\newblock {\em Information Processing Letters}, 110(16):702 -- 706, 2010.

\bibitem{evc_lb_Arxiv}
Jasine Babu and Veena Prabhakaran.
\newblock A new lower bound for eternal vertex cover number.
\newblock {\em CoRR}, abs/1910.05042, 2019.

\bibitem{GOLDWASSER2008}
John~L. Goldwasser and William~F. Klostermeyer.
\newblock Tight bounds for eternal dominating sets in graphs.
\newblock {\em Discrete Mathematics}, 308(12):2589--2593, 2008.

\bibitem{KLOSTERMEYER2012}
William~F. Klostermeyer and C.M. Mynhardt.
\newblock Vertex covers and eternal dominating sets.
\newblock {\em Discrete Applied Mathematics}, 160(7):1183--1190, 2012.

\bibitem{rinemberg2019}
Mart{\'\i}n Rinemberg and Francisco~J Soulignac.
\newblock The eternal dominating set problem for interval graphs.
\newblock {\em Information Processing Letters}, 146:27--29, 2019.

\bibitem{goddard2005}
Wayne Goddard, Sandra~M Hedetniemi, and Stephen~T Hedetniemi.
\newblock Eternal security in graphs.
\newblock {\em J. Combin. Math. Combin. Comput}, 52:169--180, 2005.

\bibitem{Hartnell2014}
B.L. Hartnell and C.M. Mynhardt.
\newblock Independent protection in graphs.
\newblock {\em Discrete Mathematics}, 335:100 -- 109, 2014.

\bibitem{Caro2016EternalIS}
Yair Caro and William Klostermeyer.
\newblock Eternal independent sets in graphs.
\newblock {\em Theory and Applications of Graphs},
\newblock   Volume~3, 2016.

\bibitem{Klostermeyer2011}
W.~F. Klostermeyer and C.~M. Mynhardt.
\newblock Graphs with equal eternal vertex cover and eternal domination
  numbers.
\newblock {\em Discrete Mathematics}, 311:1371 -- 1379, 2011.

\bibitem{Anderson2014}
Mark Anderson, Julie~R. Carrington, Robert~C. Brigham, Ronald D.Dutton, and
  Richard~P. Vitray.
\newblock Graphs simultaneously achieving three vertex cover numbers.
\newblock {\em Journal of Combinatorial Mathematics and Combinatorial
  Computing}, 91:275 -- 290, 2014.

\bibitem{Hisashi_Gen_Trees}
Hisashi Araki, Toshihiro Fujito, and Shota Inoue.
\newblock On the eternal vertex cover numbers of generalized trees.
\newblock {\em IEICE Transactions on Fundamentals of Electronics,
  Communications and Computer Sciences}, E98.A:1153--1160, 06 2015.

\bibitem{HARE1987437}
E.~Hare, S.~Hedetniemi, R.~Laskar, K.~Peters, and T.~Wimer.
\newblock Linear-time computability of combinatorial problems on
  generalized-series-parallel graphs.
\newblock In David~S. Johnson, Takao Nishizeki, Akihiro Nozaki, and Herbert~S.
  Wilf, editors, {\em Discrete Algorithms and Complexity}, pages 437 -- 457.
  Academic Press, 1987.

\bibitem{BCFPRW:CALDAM}
Jasine Babu, L~Sunil Chandran, Mathew Francis, Veena Prabhakaran, Deepak
  Rajendraprasad, and J~Nandini Warrier.
\newblock On graphs with minimal eternal vertex cover number.
\newblock In {\em Conference on Algorithms and Discrete Applied Mathematics
  (CALDAM)}, pages 263--273. Springer, 2019.

\end{thebibliography}
\end{document}